\documentclass[12 pt]{article}
\counterwithin*{equation}{section}
\usepackage[all]{xy}
\usepackage{amsmath,amssymb,amsfonts,latexsym,float,graphics,epsfig}

\usepackage{setspace}
\usepackage{mathtools}
\usepackage{xcolor}
\usepackage[top=2cm, bottom=2cm, left=2cm, right=2cm]{geometry}
\usepackage[colorlinks]{hyperref}
\setlength{\parindent}{0cm}
\usepackage{tikz}
\usetikzlibrary{cd}


\newtheorem{theorem}{Theorem}[section]
\newtheorem{proposition}[theorem]{Proposition}

\newtheorem{proof}[theorem]{Proof}




%

\begin{document}

\title{Discrete Dynamics on Locally Conformal Framework}  
\maketitle

\begin{center}
 \author{
 O\u{g}ul Esen\footnote{oesen@gtu.edu.tr}*, Ayten Gezici\footnote{agezici@gtu.edu.tr, corresponding author}*, Hasan Gümral\footnote{hgumral@yeditepe.edu.tr}**\\

  \bigskip
  
 *Department of Mathematics, Gebze Technical University,  \\  41400 Gebze-Kocaeli, Turkey  \\  
** Department of Mathematics, Yeditepe University,  
  \\   34755 Ataşehir-İstanbul, Turkey  }
\end{center}

\maketitle

\begin{abstract}
In this paper, we address the globalization problem of discrete Lagrangian and Hamiltonian dynamics in locally conformal framework.
\end{abstract}

\tableofcontents
\setlength{\parskip}{4mm}
\onehalfspacing

\section{Introduction}

 Discrete dynamics is gaining attention from both theoretical and practical perspectives due to its deep geometrical and algebraic foundations, as well as its effectiveness in solving equations numerically  \cite{BlanesCasas,Hairer}. Notably, discrete Hamiltonian dynamics \cite{Sanz-Serna,Yo,GeMarsden} and discrete Euler-Lagrange equations \cite{guibout2004discrete,MarsWest} are two significant approaches within the field of discrete geometry. The Hamilton-Jacobi theorem is also addressed in discrete models \cite{OhsawaBlochLeok, OhsawaBlochLeok2}.  In its most abstract form, Lie groupoids play a crucial role in discrete dynamics \cite{MarrDiegMart06}. In this framework, the coupling problem of two mutually interacting system is examined in \cite{esen2018matched}. Additionally, the study of implicit discrete dynamics and the Tulczyjew triplet for singular dynamics has been explored in detail in \cite{Leok-Tulc}. 
 
To clarify the purpose of this paper, let's briefly review some geometric constructions. 
Jacobi manifolds\footnote{Kirillov called this geometry as local Lie algebra \cite{Kirillov}.} represent generalizations of Poisson manifolds obtained by relaxing the Leibniz rule \cite{Lich,Lichnerowicz-Jacobi,Marle91,vaisman2002jacobi}. These manifolds have collected attention for their suitability in describing irreversible dynamic systems such as thermodynamics and dissipative models \cite{CarinenaGuha,SchaftMaschke}. Jacobi manifolds are completely integrable, where the odd-dimensional leaves of the Jacobi manifold are regarded as contact manifolds \cite{Arn,libermann2012symplectic}, while the even-dimensional leaves are considered as locally conformally symplectic (LCS) manifolds \cite{Banyaga,Vaisman85}. The discrete dynamics within contact geometry recently announced in the literature, see, for example, \cite{Bravetti2, Bravetti1, esen2022discrete,Vermeeren}. 

In the present work, we investigate discrete dynamics on locally conformally symplectic geometry. As we shall be more concrete in the main body of the paper, the study of discrete dynamics within the locally conformal symplectic framework addresses the globalization problem encountered in variational and symplectic integrators. Specifically, this paper explores the geometric aspects of local dynamical behaviour that fails to be extended globally.  

The paper is structured as follows: 
To arrive at a proper discretization, we begin by applying the inverse Legendre transformation to the continuous Hamiltonian dynamics on the LCS manifolds. This leads us to arrive at a novel framework called locally conformally Euler-Lagrange (LCEL) dynamics. The LCEL system represents a generalization of the classical Euler-Lagrange equations. Subsequently, we replace the action integral with a discrete sum and apply discrete variation. As a result, we introduce discrete locally conformally Euler-Lagrange (dLCEL) dynamics, which generalizes the discrete Euler-Lagrange equations (dEL). Finally, we employ the Legendre transformation to obtain the discrete locally conformally symplectic  (dLCS) Hamilton's equation within discrete geometry.

Accordingly, the paper is organized into two main sections. In the following section, following \cite{esen2023LCS-kin}, we present a quick summary of LCS geometry on the cotangent bundle. We further extend this locally conformal characterization to the tangent bundle and derive locally conformally Euler-Lagrange LCEL equations. In section \ref{discrete-sec}, we present the discretizations of LCS Hamiltonian dynamics and LCEL equations. These novel dynamics will be obtained on a purely geometrical basis, making them suitable for further numerical  as well as  theoretical analysis. 



\section{Locally Conformally Continuous Dynamics}

\subsection{Locally Conformally Symplectic Geometry}

A manifold $M$ is symplectic if it possesses a closed, non-degenerate two-form $\Omega$, called the symplectic two-form \cite{libermann2012symplectic,Marsden1999}. On a symplectic manifold, Hamilton's equation for a given Hamiltonian function $H$ is defined to be 
\begin{equation}
\iota_{X_H}\Omega = dH,
\end{equation}
where $X_H$ is the Hamiltonian vector field.

Consider the cotangent bundle $T^*Q$ of a manifold $Q$, where the canonical (Liouville) one-form $\Theta_{Q}$ exists. The canonical symplectic two-form on $T^*Q$ is given by the negative exterior derivative 
\begin{equation}
\Omega_Q=-d\Theta_Q
\end{equation}
of the canonical one-form $\Theta_{Q}$. This construction provides a foundational understanding of symplectic structures since locally every symplectic manifold is looking like a cotangent bundle. 

\bigskip
\noindent 
\textbf{Locally Conformally Symplectic Manifolds.} A manifold $M$ is called a locally conformally symplectic (LCS) manifold if it admits a non-degenerate two-form $\Omega$ so that 
\begin{equation}\label{eqw}
d\Omega=\varphi\wedge \Omega
\end{equation}
for a closed (called as Lee) one-form $\varphi$, see \cite{Bazzoni2018,LeeHC,Libermann-LCS,Vaisman85}.  We denote an LCS manifold by $(M,\Omega,\varphi)$. In terms of the Lichnerowicz-deRham differential\footnote{For a closed one-form $B$, the Lichnerowicz-deRham differential is defined to be $d_B\Omega=d\Omega-B\wedge \Omega$, see \cite{GuLi84}.} we can write the equality \eqref{eqw} as
\begin{equation}
d_\varphi\Omega=0.
\end{equation}
If the dimension of an LCS manifold is $2n$ then we determine the LCS volume-form as  
\begin{equation}\label{LCS-volume}
d \mu=\frac{(-1)^{n(n-1)/2}}{n!} \Omega\wedge \dots\wedge \Omega.
 \end{equation}
 
The local picture for an LCS manifold is as follows. On every local chart of the manifold $M$ there exists local symplectic two-form. We record a symplectic chart as $(U_\alpha,\Omega_\alpha)$. On the intersection $U_\alpha \cap U_\beta$ of two local charts, the local symplectic two-forms $\Omega_\alpha$ and $\Omega_\beta$ are related with 
\begin{equation}\label{sympl-rel-}
   e^{\sigma_\alpha}\Omega_\alpha = e^{\sigma_\beta}\Omega_\beta,
\end{equation}
where the family of real-valued functions $\{\sigma_\alpha \}$ are the conformal factors. 
It is noteworthy to see that the conformal factors satisfy the cocycle condition, allowing us to assemble local symplectic two-forms $\Omega_\alpha$ into a line bundle-valued two-form on $M$. We define a real-valued global two-form $\Omega$ using the following local characterization:
\begin{equation}
\Omega\vert_\alpha=e^{\sigma_\alpha}\Omega_\alpha,
\end{equation}
where $\Omega\vert_\alpha$ is the restriction of the global two-form to $U_\alpha$.
The exterior derivative of this two-form is computed to be 
\begin{equation}\label{hola}
 d\Omega\vert_\alpha=d\sigma_\alpha\wedge \Omega\vert_\alpha,
\end{equation}
which is the same as the LCS two-form condition in \eqref{eqw} with the Lee one-form $\varphi\vert_\alpha=d\sigma_\alpha$ defined as the exterior derivative of the conformal factor. 

\bigskip
\noindent 
\textbf{Cotangent Bundle and Coordinate Realization.} 
Generic examples of LCS manifolds are cotangent bundles  \cite{ChMu17}. 
 Consider a closed one-form $\vartheta$ on a manifold $Q$. We pull back this one-form to the cotangent bundle using the cotangent bundle projection and obtain a closed and semi-basic one-form 
\begin{equation}\label{varphi}
    \varphi = (\pi_Q)^* \vartheta.
\end{equation}
The negative LdR differential of the canonical one-form $\Theta_Q$ is a LCS two-form
\begin{equation}\label{Omega-phi}
\Omega_\varphi=-d_\varphi \Theta_Q = - d\Theta_Q + \varphi\wedge \Theta_Q  = \Omega_Q + \varphi\wedge \Theta_Q .
\end{equation} 
Therefore, the three-tuple  $
(T^*Q,\Omega_\varphi,\varphi)$  is an LCS manifold. 

From now on we intereted in the LCS manifold $
(T^*Q,\Omega_\varphi,\varphi)$. Let us be a bit more precise about this construction.  Assume an open covering for a manifold $Q$ with dimension $n$: 
\begin{equation}\label{chart}
Q=\bigsqcup_\alpha V_\alpha
\end{equation}
This induces manifold structures on the tangent and the cotangent bundle as
\begin{equation}\label{chartsTQ}
TQ=\bigsqcup_\alpha TV_\alpha = \bigsqcup_\alpha V_\alpha \times \mathbb{R}^n, \qquad T^*Q= \bigsqcup_\alpha T^*V_\alpha = \bigsqcup_\alpha V_\alpha \times (\mathbb{R}^n)^*,
\end{equation}
respectively. The Darboux coordinates $(q^i,p_i)$ on $T^*V_\alpha$ reads the global canonical one-form $\Theta_{Q}$ and the global symplectic two-form $\Omega_Q$ as
\begin{equation}\label{bar-bar}
\Theta_{Q}\vert_\alpha=p_idq^i, \qquad \Omega_Q\vert_\alpha=dq^i\wedge dp_i,
\end{equation}
respectively. On the other hand, there exist Darboux coordinates $(q^i,r_i)$ that yield the local canonical one-form $\Theta_\alpha$ (which is not global) and the local canonical symplectic two-form $\Omega_\alpha$ (which is also not global) as
\begin{equation}\label{r-p}
\Theta_\alpha = r_i dq^i, \qquad \Omega_\alpha=dq^i \wedge dr_i,
\end{equation}
respectively. The relationship between local momentum coordinates is 
\begin{equation}
r_i = e^{-\sigma_\alpha}p_i. 
\end{equation}
As presented in \eqref{Omega-phi}, the Lee one-form is a semi-basic one-form on the cotangent bundle framework. So, without a loss of generality,  we can assume that the conformal factors depend only on the coordinates $(q^i)$ of the base manifold $Q$. That is $\sigma_\alpha=\sigma_\alpha(q)$.

Accordingly, we can write the local symplectic two-form $\Omega_\alpha$ in \eqref{bar-bar} in terms of the coordinates $(q^i,p_i)$  as 
\begin{equation}
\Omega_\alpha= -d\Theta_\alpha = dq^i \wedge dr_i= e^{-\sigma_\alpha} \Big (dq^i\wedge dp_i + \frac{1}{2} A_{ij}dq^i\wedge dq^j \big), 
\end{equation}
where we have used the following notations: 
\begin{equation}\label{Aij}
A_{ij}:= \varphi_ip_j -\varphi_jp_i, \qquad \varphi_i:=\frac{\partial \sigma_\alpha}{\partial q^i}.
\end{equation}
Thus LCS two-form $\Omega_\varphi$ defined in \eqref{Omega-phi} is written in terms of the coordinates $(q^i,p_i)$ as
\begin{equation}\label{twist}
\Omega_\varphi\vert_\alpha = dq^i\wedge dp_i + \frac{1}{2} A_{ij}dq^i\wedge dq^j .
\end{equation}
See that the first term on the right-hand side is the canonical symplectic two-form $\Omega_Q$ whereas the second one is precisely $\varphi\wedge \Theta_Q$. Therefore, the local realization \eqref{twist} fits well with the global definition of $\Omega$ in \eqref{Omega-phi}. The following diagram summarizes the situation in one picture:
 \begin{equation*}
\xymatrix{
{\begin{array}{c} \Omega_Q\vert_\alpha \\ \eqref{bar-bar} \end{array}}   
 \ar[rrrd]^{\quad twist ~ \eqref{Omega-phi}}   & & & 
 {\begin{array}{c} \end{array}}  \\
  {\begin{array}{c} \Theta_Q\vert_\alpha \\ \eqref{bar-bar} \end{array}}   
 \ar[rrr]^{-d_\varphi} \ar[u]^{-d} &   & & 
 {\begin{array}{c} \Omega_\varphi\vert_\alpha \\
 \eqref{twist}\end{array}}  \\
{\begin{array}{c} \Theta_\alpha \\ \eqref{r-p}  \end{array}}  \ar[rrr]^{-d} \ar[u]^{e^{\sigma_\alpha}}&  & & {\begin{array}{c} \Omega_\alpha \\  \ar[u]^{e^{\sigma_\alpha}}\eqref{r-p} \end{array}} }
\end{equation*}
 We cite \cite{OtimanStanciu2017} for Darboux-Weinstein theorem for LCS manifolds. 

\bigskip
\noindent 
\textbf{LCS Hamiltonian Dynamics.} 
On a local symplectic space $(T^*V_\alpha,\Omega_\alpha)$ with conformal factors $(\sigma_\alpha)$, for a local Hamiltonian function $H_\alpha$, we write local Hamilton's equation as
\begin{equation}\label{geohamalpha}
    \iota_{X_{\alpha}}\Omega_{\alpha}=dH_{\alpha},
\end{equation}
where $X_{\alpha}$ is the local Hamiltonian vector field. Assuming the local equality 
\begin{equation}\label{Fatih}
e^{\sigma_\alpha}H_\alpha=e^{\sigma_\beta}H_\beta
\end{equation}
on the intersection $T^*V_\alpha\cap T^*V_\beta$, we define a global function $H$ on $T^*Q$ so that its restriction reads 
\begin{equation} \label{glueHamFunc}
H\vert_\alpha=e^{\sigma_\alpha}H_\alpha.
\end{equation}
In accordance with, we glue the local Hamilton's equation \eqref{geohamalpha} to the LCS Hamilton's equation
\begin{equation}\label{semiglobal}
   \iota_{\xi_{H}}\Omega=d_\varphi H.
\end{equation}
Here, the LCS Hamiltonian vector field $\xi_{H}$ is obtained by gluing together the local Hamiltonian vector fields $X_\alpha$ from \eqref{geohamalpha}, such that $
\xi_{H}\vert_\alpha=X_\alpha $. 
In terms of the Darboux coordinates $(q^i, p_i)$ of the canonical symplectic structure on the cotangent bundle $T^*Q$, the LCS Hamilton's equation \eqref{semiglobal} is expressed as:
\begin{equation}\label{LCSHE}
\frac{dq^i}{dt} = \frac{\partial H\vert_\alpha}{\partial p_i}, \qquad
\frac{dp_i}{dt} = -\frac{\partial H\vert_\alpha}{\partial q^i} - A_{ij} \frac{\partial H\vert_\alpha}{\partial p_j} + H\vert_\alpha\varphi_i,
\end{equation}
where the covariant quantity $A_{ij}$ is defined in \eqref{Aij}. We record also the divergence of the LCS Hamiltonian vector field:
\begin{equation} \label{div-xi-H}
\mathrm{div} (\xi_H)= \frac{1}{2}n \langle \varphi. \xi_H\rangle. 
\end{equation}
Notice that if there is no nontrivial conformal factor then LCS Hamilton's equation in \eqref{LCSHE} reduces to Hamilton's equation
\begin{equation}\label{HE}
   \frac{dq^i}{dt} = \frac{\partial H\vert_\alpha}{\partial p_i}, \qquad
\frac{dp_i}{dt} = -\frac{\partial H\vert_\alpha}{\partial q^i}.  
\end{equation}

\subsection{Locally Conformally Lagrangian Dynamics}\label{LCEL-sec} 

We follow the notations and definitions presented in the previous section. Consider  
a Lagrangian function $L_\alpha=L_\alpha(q^i,\dot{q}^i)$ define on the local tangent bundle $TV_\alpha$ and define the Legendre transformation
\begin{equation}\label{local-FL}
\mathbb{F}L_\alpha: TV_\alpha \longrightarrow T^*V_\alpha, \qquad (q^i,\dot{q}^i) \mapsto (q^i,r_i)=\big(q^i,\frac{\partial L_\alpha}{\partial \dot{q}^i}\big).
\end{equation}
We pull back the local canonical one-form and local symplectic two-form with this Legendre transformation and obtain local Poincar\'{e}-Cartan one-form and local Poincar\'{e}-Cartan two-form given by
\begin{equation}\label{loc-PC}
\begin{split}
        \theta_\alpha &= \mathbb{F}L^*_\alpha\Theta_\alpha = \frac{\partial L_\alpha}{\partial \dot{q}^i} dq^i 
        \\
        \omega_\alpha  &=  \mathbb{F}L^*_\alpha \Omega_\alpha =\frac{\partial^2 L_\alpha}{\partial \dot{q}^i \partial q^j} dq^i\wedge dq^j+\frac{\partial^2 L_\alpha}{\partial \dot{q}^i \partial \dot{q}^j} dq^i\wedge d\dot{q}^j,
\end{split}
\end{equation}
respectively. The local two-form $\omega_\alpha$ is symplectic if the rank of the Hessian matrix $[\partial^2 L_\alpha/\partial \dot{q}^i \partial \dot{q}^j ]$ is the maximum that is if the Lagrangian is regular. 

The local energy function $E_{L_\alpha}$ for a given local Lagrangian function $L_\alpha$ is defined by means of the Liouville vector field $\Delta$ as 
  \begin{equation}\label{locEnergy}
      E_{L_\alpha}= \Delta(L_\alpha)-L_\alpha, \qquad  \Delta=\dot{q}^i\frac{\partial}{\partial \dot{q}i}.
  \end{equation}
We introduce the following covariance relation
\begin{equation}\label{locEL}
    \iota_{Y_\alpha}\omega_\alpha=dE_{L_\alpha}
\end{equation}
on the tangent bundle $TV_\alpha$. This equation reads the local Euler-Lagrange equations 
\begin{equation}\label{EL-alpha}
\frac{d }{dt}  \frac{\partial L_\alpha}{\partial \dot{q}^i} =   \frac{\partial L_\alpha}{\partial q^i}.
\end{equation}

\bigskip
\noindent
\textbf{Locally conformally Poincar\'{e}-Cartan Forms.} 
Consider the tangent bundle $TQ$ covered by local charts $TV_\alpha$, each equipped with a local Lagrangian function. It is assumed that these local Lagrangian functions satisfy a conformal relation on intersections, denoted as $TV_\alpha \cap TV_\beta$, given by
\begin{equation}\label{laglog}
e^{\sigma_\alpha}L_\alpha =e^{\sigma_\beta}L_\beta,
\end{equation}
where ${\sigma_\alpha}$ represents a family of smooth functions. 
This construction implies that the local Lagrangian functions ${L_\alpha}$ cannot be seamlessly combined into a single real-valued Lagrangian function on the tangent bundle $TQ$. Instead, these local functions define a line-bundle-valued function on $TQ$. However, by multiplying the local functions by conformal factors, as indicated in \eqref{laglog}, a global real-valued function $L$ on $TQ$ emerges. The restriction of $L$ to an open chart $TV_\alpha$ is given by
\begin{equation}\label{loc-Lagrange}
L\vert_\alpha = e^{\sigma_\alpha}L_\alpha.
\end{equation}

This local-to-global realization is also true for local Poincar\'{e}-Cartan forms. Notice that local Poincar\'{e}-Cartan forms defined as \eqref{loc-PC} are related with conformal factors on the intersection $TV_\alpha \cap TV_\beta$. That is
\begin{equation}\label{dogu}
\begin{split}
         e^{\sigma_\alpha}\theta_\alpha &= e^{\sigma_\beta}\theta_\beta,
         \\
         e^{\sigma_\alpha}\omega_\alpha &= e^{\sigma_\beta}\omega_\beta,
\end{split}
\end{equation}
respectively. These local forms can only be glued up to a line bundle-valued one-form. Referring to \eqref{dogu}, we define global one-form $\theta_{L}$ and $\omega_\varphi$, in a respected manner, as follows
\begin{equation}\label{glo-glue-forms}
\begin{split}
\theta_{L}\vert_\alpha  &= e^{\sigma_\alpha}\theta_\alpha, \\
       \omega_\varphi \vert_\alpha &= e^{\sigma_\alpha}\omega_\alpha.
\end{split}
\end{equation}
Note that the negative exterior derivative of $\theta_L\vert_\alpha$ gives the following two-form
\begin{equation}\label{glo-PC-rest}
  \omega_L\vert_\alpha= - d\theta_L\vert_\alpha  =  
 -d\sigma_\alpha\wedge e^{\sigma_\alpha}\theta_\alpha + e^{\sigma_\alpha}\omega_\alpha.
\end{equation}
The global realizations of each term give that
\begin{equation}\label{conf-EP}
    \omega _\varphi = \omega_L + \varphi \wedge \theta_L,\qquad \omega _\varphi  = -d_\varphi\theta_L.
\end{equation}  
We call $\omega _\varphi$ as locally conformally Euler-Poincar\'{e} two-form. We present the following diagram to summarize the differential forms obtained in this subsection:
 \begin{equation}
\xymatrix{
{\begin{array}{c} \omega_L\vert_\alpha \\ \eqref{glo-PC-rest} \end{array}}   
 \ar[rrrd]^{\quad twist ~\eqref{conf-EP}}   & & & 
 {\begin{array}{c} \end{array}}  \\
  {\begin{array}{c} \theta_L\vert_\alpha \\ \eqref{glo-glue-forms} \end{array}}   
 \ar[rrr]^{-d_\varphi} \ar[u]^{-d} &   & & 
 {\begin{array}{c} \omega_\varphi\vert_\alpha \\
 \eqref{glo-glue-forms}\end{array}} \\
{\begin{array}{c} \theta_\alpha \\ \eqref{loc-PC}  \end{array}}  \ar[rrr]^{-d} \ar[u]^{e^{\sigma_\alpha}}&  & & {\begin{array}{c} \omega_\alpha \\  \ar[u]^{e^{\sigma_\alpha}}\eqref{loc-PC} \end{array}} }
\end{equation}

Assume that the energy function is also locally conformal that is    
\begin{equation}\label{globalenergyfnc}
E_{L}|_{\alpha}=e^{\sigma_{\alpha}} E_{L_\alpha}.
\end{equation}
By glueing the local Euler-Lagrange equation \eqref{locEL}, we arrive at the locally conformally Euler-Lagrange (LCEL) equation
\begin{equation}\label{LCEL-global}
     \iota_{\xi_L} \omega _\varphi =d_{\varphi}E_L
\end{equation}
  where the vector field $\xi_L$ is the global picture of the vector field $Y_\alpha$ that is $\xi_L\vert_\alpha=Y_\alpha$. 
      In the induced coordinates $(q^i,\dot{q}^i)$ on the tangent bundle $TQ$, the locally conformally Euler-Lagrange equations \eqref{LCEL-global} is written as \cite{esen2023LCS-kin}
         \begin{equation} \label{LCEL}
   \frac{d }{dt}(\frac{\partial L\vert _\alpha}{\partial \dot{q}^i}) - \frac{\partial L\vert _\alpha}{\partial q^i}= \varphi_j \dot{q}^j (\frac{\partial L\vert _\alpha}{\partial \dot{q}^i}) 
-\varphi_i L\vert _\alpha .
   \end{equation} 
  See that when the conformal factor is zero, we obtain the Euler-Lagrange equation:
\begin{equation}\label{EL}
\frac{d }{dt}(\frac{\partial L\vert _\alpha}{\partial \dot{q}^i}) - \frac{\partial L\vert _\alpha}{\partial q^i}=0
\end{equation}

\section{Locally Conformally Discrete Dynamics} \label{discrete-sec}

\subsection{Discrete Lagrangian and Discrete Hamiltonian Dynamics}

In continuous dynamics, Lagrangian function is defined on the tangent bundle $TQ$. In this case, solving the Euler-Lagrange equations \eqref{EL} involves determining a smooth curve $q = q(t)$ in $Q$. This curve, along with its velocity $(q(t), \dot{q}(t))$, is substituted into the Lagrangian function to achieve an extremum of the action integral. We cite \cite{MarsWest, OhsawaBlochLeok,OhsawaBlochLeok2} for more details on the discrete formulations presented in this subsection. 

To discretize the Lagrangian dynamics we replace the role of the smooth curve with a finite sequence of points. For example, let $a,b\in \mathbb{R}$ and $a<b$ then define the step $h=(b-a)/N$, where $N$ is the number of divisions, that is the number of elements in the finite sequence. We determine such a sequence by 
\begin{equation}
[\mathbf{q}]=(\mathbf{q}_a=\mathbf{q}_0,\dots,\mathbf{q}_b=\mathbf{q}_N).
\end{equation}
Here, the bold notation stands for $\mathbf{q}=(q^i)$. 
In the discrete framework, the Lagrangian function is substituted by a discrete Lagrangian function $L^d: Q\times Q\rightarrow \mathbb{R}$ which is an  approximation to the exact discrete Lagrangian 
\begin{equation}
L^{d-ex}(\mathbf{q}_\kappa,\mathbf{q}_{\kappa+1})=\int_{t_\kappa}^{t_{\kappa+1}}{L(\mathbf{q}(t),{\bf \dot{q}}(t))dt},
\end{equation}
where $\mathbf{q}:[t_\kappa,t_{\kappa+1}]\rightarrow Q$ is the solution of the continuous Euler-Lagrange equation with boundary conditions $\mathbf{q}(t_\kappa)=\mathbf{q}_\kappa$ and $\mathbf{q}(t_{\kappa+1})=\mathbf{q}_{\kappa+1}$. In the following proposition we state discrete Euler-Lagrange equations, see, for example, \cite{MarsWest}. 
\begin{proposition}\label{dEL-prop}
The discrete Euler-Lagrange (dEL) equation, arising from a discrete Lagrangian function $L^d$, is  
\begin{equation}\label{dEL}
D_2 L^d(\mathbf{q}_{\kappa-1},\mathbf{q}_\kappa) + D_1 L^d(\mathbf{q}_\kappa,\mathbf{q}_{\kappa+1}) = 0,
\end{equation}
where  $D_1$ denotes the partial derivative with respect to the first argument whereas $D_2$ is the partial derivative with respect to the second argument.
\end{proposition}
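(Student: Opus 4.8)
The plan is to obtain \eqref{dEL} from a discrete variational (Hamilton's) principle, replacing the action integral by a finite sum and imposing stationarity under fixed-endpoint variations. First I would associate to the discrete Lagrangian $L^d$ the discrete action
\begin{equation}
S^d([\mathbf{q}]) = \sum_{\kappa=0}^{N-1} L^d(\mathbf{q}_\kappa,\mathbf{q}_{\kappa+1}),
\end{equation}
evaluated on a path sequence $[\mathbf{q}] = (\mathbf{q}_0,\dots,\mathbf{q}_N)$, and then consider variations $\delta\mathbf{q}_\kappa$ of the nodes that keep the endpoints fixed, $\delta\mathbf{q}_0 = \delta\mathbf{q}_N = 0$. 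This is the discrete counterpart of the smooth curve $q=q(t)$ with fixed boundary values used in the continuous Euler-Lagrange derivation.

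Next I would compute the first variation $\delta S^d$. Because each summand $L^d(\mathbf{q}_\kappa,\mathbf{q}_{\kappa+1})$ couples two consecutive nodes, every interior node $\mathbf{q}_\kappa$ enters the action in exactly two terms: once through the second slot of $L^d(\mathbf{q}_{\kappa-1},\mathbf{q}_\kappa)$ and once through the first slot of $L^d(\mathbf{q}_\kappa,\mathbf{q}_{\kappa+1})$. The key step is then a discrete summation by parts—the combinatorial analogue of integration by parts—in which I reindex the two contributions so that the coefficient multiplying each independent variation $\delta\mathbf{q}_\kappa$ is collected into the single expression
\begin{equation}
D_2 L^d(\mathbf{q}_{\kappa-1},\mathbf{q}_\kappa) + D_1 L^d(\mathbf{q}_\kappa,\mathbf{q}_{\kappa+1}).
\end{equation}
The leftover terms are supported only at $\kappa = 0$ and $\kappa = N$ and hence drop out under the fixed-endpoint constraint.

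I expect the index bookkeeping in this summation-by-parts step to be the main (though modest) obstacle, since it is precisely where the two-node dependence of $L^d$ produces the characteristic three-node stencil $(\mathbf{q}_{\kappa-1},\mathbf{q}_\kappa,\mathbf{q}_{\kappa+1})$, and where one must verify that the boundary contributions genuinely cancel rather than contribute spurious terms. To conclude, I would invoke the fundamental lemma of the discrete calculus of variations: the interior variations $\delta\mathbf{q}_1,\dots,\delta\mathbf{q}_{N-1}$ are independent and arbitrary, so the stationarity condition $\delta S^d = 0$ forces each collected coefficient to vanish, which yields \eqref{dEL} at every interior index $\kappa$.
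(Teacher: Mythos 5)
Your proposal is correct and follows essentially the same route as the paper's own proof: the discrete action sum, fixed-endpoint variations, the reindexing (summation-by-parts) step that collects the coefficient of each interior $\delta\mathbf{q}_\kappa$ into $D_2 L^d(\mathbf{q}_{\kappa-1},\mathbf{q}_\kappa) + D_1 L^d(\mathbf{q}_\kappa,\mathbf{q}_{\kappa+1})$, and the conclusion from the arbitrariness of the interior variations. The only cosmetic difference is that you name the final step the fundamental lemma of discrete variational calculus, which the paper uses implicitly.
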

\begin{proof}
Let us write action sum for discrete Lagrangian $L^d$ as the following finite sum: 
 \begin{equation}
  S^d([\mathbf{q}])=   \sum_{\kappa=0}^{N-1}L^d(\mathbf{q}_\kappa, \mathbf{q}_{\kappa + 1}).
\end{equation}
We take the variation of this action and endow the discrete variational principle   $\delta S^d(\mathbf{q})=0 $. This gives us the equality:  
\begin{equation}
    \begin{split}
  \delta S^d(\mathbf{q})= \sum_{\kappa=0}^{N-1}(D_1L^d(\mathbf{q}_\kappa, \mathbf{q}_{\kappa + 1})\delta\mathbf{q}_\kappa + D_2L^d(\mathbf{q}_\kappa, \mathbf{q}_{\kappa + 1})\delta\mathbf{q}_{\kappa+1})=0.
    \end{split}
\end{equation}
Considering that at the boundary the variations are identically zero that is $\delta\mathbf{q}_0=\delta\mathbf{q}_N=0 $, we have the following expression: 
\begin{equation}
    \sum_{\kappa=1}^{N-1}D_1L^d(\mathbf{q}_\kappa, \mathbf{q}_{\kappa + 1})\delta\mathbf{q}_\kappa+  \sum_{\kappa=0}^{N-2}D_2L^d(\mathbf{q}_\kappa, \mathbf{q}_{\kappa + 1})\delta\mathbf{q}_{\kappa+1}=0.
\end{equation}
In the second term, we change $\kappa$ with $\kappa-1$ in order to take the whole expression into the parenthesis of $\delta\mathbf{q}_{\kappa}$. This reads 
\begin{equation}
 \sum_{\kappa=1}^{N-1}(D_1L^d(\mathbf{q}_\kappa, \mathbf{q}_{\kappa + 1}) + D_2L^d(\mathbf{q}_{\kappa-1}, \mathbf{q}_{\kappa }))\delta\mathbf{q}_{\kappa}=0.  
\end{equation}
For each point in the sequence, we get the discrete Euler-Lagrange equation \eqref{dEL}.  
\end{proof}

\bigskip
\noindent
\textbf{The Discrete Legendre Transformation.} As in the case of continuous dynamics, we employ the Legendre transformation to pass from the discrete Lagrangian dynamics to the discrete Hamiltonian dynamics. In the discrete case, we have the right and left Legendre transformations 
\begin{equation}\label{dFL}
\mathbb{F}^{\pm}L^{d}:Q\times Q \longrightarrow T^*Q
\end{equation}
given respectively by 
\begin{equation}\label{legtrans}
\begin{split}
 \mathbb{F}^+L^{d}(\mathbf{q}_\kappa,\mathbf{q}_{\kappa+1})&=(\mathbf{q}_{\kappa+1},D_2L^d(\mathbf{q}_\kappa,\mathbf{q}_{\kappa+1})), \\
\mathbb{F}^-L^{d}(\mathbf{q}_\kappa,\mathbf{q}_{\kappa+1})&=(\mathbf{q}_{\kappa}, -D_1L^d(\mathbf{q}_\kappa,\mathbf{q}_{\kappa+1})).
\end{split}
\end{equation}
We determine the momenta as the second entry of the image space of the discrete Legendre transformations. Accordingly, we have the right and the left momenta as 
\begin{equation}
    \begin{split}
        &\mathbf{p}^+_{\kappa,\kappa +1}=D_2L^d(\mathbf{q}_\kappa,\mathbf{q}_{\kappa+1}),\\
        &\mathbf{p}^-_{\kappa,\kappa +1} = -D_1L^d(\mathbf{q}_\kappa,\mathbf{q}_{\kappa+1}),
    \end{split}
\end{equation}
respectively. 
See that from the discrete Euler-Lagrange equation \eqref{dEL} we can define momenta depending on a single point as 
\begin{equation}
\mathbf{p}_\kappa:=\mathbf{p}^+_{\kappa-1,\kappa }=\mathbf{p}^-_{\kappa,\kappa +1}.
\end{equation}
Accordingly, we can write
\begin{equation}\label{glo-momenta}
    \begin{split}
   &\mathbf{p}_\kappa=  -D_1L^d(\mathbf{q}_\kappa,\mathbf{q}_{\kappa+1}),\\  &\mathbf{p}_{\kappa+1}=D_2L^d(\mathbf{q}_\kappa,\mathbf{q}_{\kappa+1}).
    \end{split}
\end{equation}
In the continuous case, the Legendre transformation is non-degenerate if the Lagrangian function is regular, meaning that the rank of the Hessian of the Lagrangian function is at its maximum. To analyze this phenomenon in the discrete case, we introduce discrete (right and left) Poincar\'{e}-Cartan one-forms:
\begin{equation}\label{PC-1}
\begin{split}
&\theta^+=D_2L^d(\mathbf{q}_{\kappa},\mathbf{q}_{\kappa+1})d\mathbf{q}_{\kappa+1},\\
&\theta^-=-D_1L^d(\mathbf{q}_{\kappa},\mathbf{q}_{\kappa+1})d\mathbf{q}_{\kappa}.
\end{split}
\end{equation}
The exterior derivatives of these one-forms coincide and determine the discrete Poincar\'{e}-Cartan two-form as
\begin{equation}\label{dPC}
 \omega^d(\mathbf{q}_\kappa,\mathbf{q}_{\kappa+1})=-d\theta^{\pm}=-D_1D_2L^d(\mathbf{q}_\kappa,\mathbf{q}_{\kappa+1})\ d\mathbf{q}_{\kappa}\wedge d\mathbf{q}_{\kappa+1}.
\end{equation}
The discrete Legendre transformation \eqref{dFL} is a local diffeomorphism if the discrete Poincar\'{e}-Cartan two-form is non-degenerate (hence symplectic). In this case, the discrete Lagrangian function is called regular. We determine the discrete Hamiltonian dynamics under this assumption.   

For a regular discrete Lagrangian function $L^d$, referring to the Darboux coordinates, we define the right discrete Hamiltonian function as the difference between the  coupling function and the discrete Lagrangian function as
\begin{equation}\label{rd-Ham}
     H^{d+}(\mathbf{q}_\kappa, \mathbf{p}_{\kappa+1})=  \mathbf{p}_{\kappa+1} \cdot  \mathbf{q}_{\kappa+1}- L^d(\mathbf{q}_\kappa, \mathbf{q}_{\kappa + 1}).
\end{equation} 
On the other hand, we define the left discrete Hamiltonian function as the difference between the \textit{minus} of the coupling function and the discrete Lagrangian function as
\begin{equation}\label{ld-Ham}
     H^{d-}(\mathbf{q}_{\kappa+1}, \mathbf{p}_{\kappa})= - \mathbf{p}_{\kappa} \cdot  \mathbf{q}_{\kappa}- L^d(\mathbf{q}_\kappa, \mathbf{q}_{\kappa + 1}).
\end{equation}
The following proposition presents the discrete Hamiltonian dynamics for both the left and right Hamiltonian functions. 
\begin{proposition} \label{dHam-main}
    The right discrete Hamilton's equations generated by the right Hamiltonian function \eqref{rd-Ham} are
    \begin{equation}
\begin{cases}\label{rd-Hameq}
&\mathbf{q}_{\kappa+1}= D_2H^{d+}(\mathbf{q}_\kappa, \mathbf{p}_{\kappa+1}), \\
&  \mathbf{p}_{\kappa}=D_1H^{d+}(\mathbf{q}_\kappa, \mathbf{p}_{\kappa+1}),
\end{cases}
\end{equation}
whereas the left discrete Hamilton's equations generated by the left Hamiltonian function \eqref{ld-Ham} are
\begin{equation}\label{ld-Hameq}
    \begin{cases}
        &\mathbf{q}_\kappa=-D_2H^{d-}(\mathbf{q}_{\kappa+1}, \mathbf{p}_{\kappa}), \\
        & \mathbf{p}_{\kappa + 1}=-D_1 H^{d-}(\mathbf{q}_{\kappa+1}, \mathbf{p}_{\kappa}),
    \end{cases}
\end{equation}
where $D_a$ denotes the partial derivative with respect to the $a$-th entry.
\end{proposition}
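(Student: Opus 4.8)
The plan is to treat Proposition \ref{dHam-main} as a direct computational consequence of the discrete Legendre transformations \eqref{legtrans} and the single-point momentum identifications \eqref{glo-momenta}. For each of the two Hamiltonians I would compute the total differential and read off its partial derivatives with respect to the two independent slots. The one conceptual point to settle beforehand is which variables are independent: $H^{d+}$ is a function of $(\mathbf{q}_\kappa,\mathbf{p}_{\kappa+1})$, so in the defining expression \eqref{rd-Ham} the configuration point $\mathbf{q}_{\kappa+1}$ must be regarded as implicitly determined through $\mathbf{p}_{\kappa+1}=D_2L^d(\mathbf{q}_\kappa,\mathbf{q}_{\kappa+1})$, and symmetrically $\mathbf{q}_\kappa$ is implicit in $H^{d-}$.

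For the right equations I would differentiate \eqref{rd-Ham}, obtaining
\begin{equation}
dH^{d+}=\mathbf{q}_{\kappa+1}\cdot d\mathbf{p}_{\kappa+1}+\mathbf{p}_{\kappa+1}\cdot d\mathbf{q}_{\kappa+1}-D_1L^d\cdot d\mathbf{q}_\kappa-D_2L^d\cdot d\mathbf{q}_{\kappa+1}.
\end{equation}
The two terms proportional to $d\mathbf{q}_{\kappa+1}$ cancel precisely because $\mathbf{p}_{\kappa+1}=D_2L^d$, and substituting $\mathbf{p}_\kappa=-D_1L^d$ from \eqref{glo-momenta} leaves $dH^{d+}=\mathbf{q}_{\kappa+1}\cdot d\mathbf{p}_{\kappa+1}+\mathbf{p}_\kappa\cdot d\mathbf{q}_\kappa$. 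Matching the coefficients of $d\mathbf{q}_\kappa$ and $d\mathbf{p}_{\kappa+1}$ then yields $\mathbf{p}_\kappa=D_1H^{d+}$ and $\mathbf{q}_{\kappa+1}=D_2H^{d+}$, which is exactly \eqref{rd-Hameq}.

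The left equations follow by the mirror computation on \eqref{ld-Ham}: differentiating produces a pair of $d\mathbf{q}_\kappa$ terms that cancel by $\mathbf{p}_\kappa=-D_1L^d$, and after inserting $\mathbf{p}_{\kappa+1}=D_2L^d$ one is left with $dH^{d-}=-\mathbf{q}_\kappa\cdot d\mathbf{p}_\kappa-\mathbf{p}_{\kappa+1}\cdot d\mathbf{q}_{\kappa+1}$. Reading off the coefficients of $d\mathbf{q}_{\kappa+1}$ and $d\mathbf{p}_\kappa$ gives $\mathbf{p}_{\kappa+1}=-D_1H^{d-}$ and $\mathbf{q}_\kappa=-D_2H^{d-}$, establishing \eqref{ld-Hameq}. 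The minus signs here are exactly those built into the left Legendre transformation and the coupling term $-\mathbf{p}_\kappa\cdot\mathbf{q}_\kappa$, so no sign discrepancy should arise if the bookkeeping is done carefully.

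The step I expect to require the most care is the legitimacy of reading off partial derivatives despite the implicit dependence of $\mathbf{q}_{\kappa+1}$ (respectively $\mathbf{q}_\kappa$) on the independent variables. The clean cancellations above are what let me avoid ever computing the Jacobian of that implicit map; nonetheless, the existence and smoothness of the map itself rest on regularity of $L^d$, i.e. the non-degeneracy of $D_1D_2L^d$, which is precisely the non-degeneracy of the discrete Poincar\'{e}-Cartan two-form $\omega^d$ in \eqref{dPC} making $\mathbb{F}^\pm L^d$ a local diffeomorphism. I would therefore invoke the implicit function theorem under this regularity hypothesis to guarantee that $\mathbf{q}_{\kappa+1}=\mathbf{q}_{\kappa+1}(\mathbf{q}_\kappa,\mathbf{p}_{\kappa+1})$ is a well-defined smooth function before carrying out the differential computation.
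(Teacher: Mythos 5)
Your proposal is correct and follows essentially the same route as the paper's proof: both differentiate the defining expressions \eqref{rd-Ham} and \eqref{ld-Ham} with respect to the two independent slots and then substitute the momentum identifications \eqref{glo-momenta}. The only difference is presentational --- you make explicit the cancellation of the $d\mathbf{q}_{\kappa+1}$ (resp.\ $d\mathbf{q}_{\kappa}$) terms coming from the implicit dependence, and the regularity hypothesis justifying it, both of which the paper leaves tacit (regularity being its standing assumption stated just before the proposition).
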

\begin{proof}
We consider the right Hamiltonian function \eqref{rd-Ham}, then take the partial derivatives of it with respect to its first entry $\mathbf{q}_{\kappa}$ and its second entry $\mathbf{p}_{\kappa+1}$. This gives us
 \begin{equation}
 \begin{split}
     &D_1H^{d+}(\mathbf{q}_\kappa, \mathbf{p}_{\kappa+1})=-D_1L^d(\mathbf{q}_\kappa, \mathbf{q}_{\kappa + 1})\\
     & D_2H^{d+}(\mathbf{q}_\kappa, \mathbf{p}_{\kappa+1})= \mathbf{q}_{\kappa+1}.
      \end{split}
 \end{equation}
Substituting the momentum definition from \eqref{glo-momenta}, we obtain the right discrete Hamilton's equations \eqref{rd-Hameq}.  

On the other hand, in the left case, we compute the partial derivatives of the left Hamiltonian function \eqref{ld-Ham} with respect to its first entry $\mathbf{q}_{\kappa+1}$ and its second entry $\mathbf{p}{\kappa}$, yielding:
\begin{equation}
\begin{split}
   & D_1 H^{d-}(\mathbf{q}_{\kappa+1}, \mathbf{p}_{\kappa})= -D_2L^d(\mathbf{q}_\kappa, \mathbf{q}_{\kappa + 1}) \\
   & D_2H^{d-}(\mathbf{q}_{\kappa+1}, \mathbf{p}_{\kappa})=-\mathbf{q}_\kappa
    \end{split}
\end{equation}
Substituting \eqref{glo-momenta}, we derive the left discrete Hamilton's equations \eqref{ld-Hameq}.  
\end{proof}

\subsection{Locally Conformally Discrete Lagrangian Dynamics}

In this subsection, we generalize the discrete dynamics discussions presented in the previous subsection to the locally conformal framework. In this novel construction, we start with elaborating the discretization of the locally conformally Euler-Lagrange (LCEL) dynamics given in Subsection \ref{LCEL-sec}. 

For a manifold $Q$, we assume an open covering as in \eqref{chart}, consisting of local charts $\{V_\alpha\}$. For each local chart $V_\alpha$, we define the discrete local Lagrangian function $L^d_\alpha$ on the product vector space $V_\alpha \times V_\alpha$ as follows
\begin{equation}
L^d_\alpha: V_\alpha \times V_\alpha \longrightarrow \mathbb{R}. 
\end{equation}
Referring to Proposition \ref{dEL-prop} and the dynamical equation \eqref{dEL}, we express the local discrete dynamics generated by this local discrete Lagrangian as
\begin{equation}\label{dEL-loc}
D_2 L^d_\alpha(\mathbf{q}_{\kappa-1},\mathbf{q}_\kappa) + D_1 L^d_\alpha(\mathbf{q}_\kappa,\mathbf{q}_{\kappa+1}) = 0.
\end{equation}

Recalling from the construction in Subsection \ref{LCEL-sec}, we assume locally conformal character  determined by the family of scalar functions $ \{\sigma_{\alpha} \}$ so that the local discrete Lagrangian functions are satisfying 
\begin{equation}\label{laglogd}
e^{\sigma_\alpha}L^d_\alpha =e^{\sigma_\beta}L^d_\beta
\end{equation}
on the intersection $V_\alpha \cap V_\beta $. 
Following \eqref{loc-Lagrange} in Subsection \ref{LCEL-sec}, we define a global discrete Lagrangian function $L^d$ on the whole manifold $Q$ whose local realization is 
\begin{equation}\label{dgLagfunc}
L^d \vert_\alpha = e^{\sigma_\alpha}L^d_\alpha . 
\end{equation}
The following proposition establishes discrete dynamics in terms of the global discrete Lagrangian function $L^d$, presenting the discretization of the locally conformal Euler-Lagrange equations \eqref{LCEL}.
\begin{proposition} \label{dLCEL-prop}
The discrete dynamics generated by the global discrete Lagrangian function $L^d$ defined in \eqref{dgLagfunc} is given by:
\begin{equation}\label{dLCEL}
\begin{split}
   & e^{\sigma_{\alpha}(\mathbf{q}_{\kappa})-\sigma_{\alpha}(\mathbf{q}_{\kappa-1})}D_2 L^d|{\alpha}(\mathbf{q}_{\kappa-1},\mathbf{q}_\kappa) \\ &  \qquad \qquad =D \sigma_{\alpha}(\mathbf{q}_{\kappa}) \cdot L^d|{\alpha}(\mathbf{q}_\kappa,\mathbf{q}_{\kappa+1})- D_1 L^d|_{\alpha}(\mathbf{q} _{\kappa},\mathbf{q}_{\kappa+1}),
\end{split}
\end{equation}
where $D_a$ denotes the partial derivative with respect to the $a$-th entry whereas $D \sigma_{\alpha}$ is the partial derivative of the function $\sigma_{\alpha}$. 
\end{proposition}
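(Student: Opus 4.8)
The plan is to treat \eqref{dLCEL} purely as a rewriting of the local discrete Euler--Lagrange equation \eqref{dEL-loc} in terms of the global discrete Lagrangian \eqref{dgLagfunc}, with no new dynamical content. The single structural input I need is that each conformal factor $\sigma_\alpha$ is a function on $V_\alpha$ alone, so that in the defining relation $L^d|_\alpha(\mathbf{q}_\kappa,\mathbf{q}_{\kappa+1})=e^{\sigma_\alpha(\mathbf{q}_\kappa)}L^d_\alpha(\mathbf{q}_\kappa,\mathbf{q}_{\kappa+1})$ the exponential weight depends only on the first slot. This asymmetry between the two arguments is exactly what will ultimately generate the $D\sigma_\alpha$ correction term and the exponential prefactor.

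First I would differentiate the defining relation in each slot separately. Because the weight sits in the first argument, the product rule gives $D_1 L^d|_\alpha = D\sigma_\alpha(\mathbf{q}_\kappa)\,L^d|_\alpha + e^{\sigma_\alpha(\mathbf{q}_\kappa)}D_1 L^d_\alpha$, whereas the second-slot derivative is clean: evaluating at $(\mathbf{q}_{\kappa-1},\mathbf{q}_\kappa)$ the weight $e^{\sigma_\alpha(\mathbf{q}_{\kappa-1})}$ is constant in $\mathbf{q}_\kappa$, so $D_2 L^d|_\alpha(\mathbf{q}_{\kappa-1},\mathbf{q}_\kappa)=e^{\sigma_\alpha(\mathbf{q}_{\kappa-1})}D_2 L^d_\alpha(\mathbf{q}_{\kappa-1},\mathbf{q}_\kappa)$. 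Solving these two identities for the local derivatives $D_1 L^d_\alpha$ and $D_2 L^d_\alpha$ expresses them entirely through global quantities weighted by $e^{-\sigma_\alpha}$, the first-slot one additionally carrying the term $-D\sigma_\alpha\,L^d|_\alpha$.

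Next I would substitute these two expressions back into the local equation \eqref{dEL-loc}. The first summand then carries the weight $e^{-\sigma_\alpha(\mathbf{q}_{\kappa-1})}$ and the second the weight $e^{-\sigma_\alpha(\mathbf{q}_\kappa)}$, reflecting that the two terms of the sum are anchored at different base points $\mathbf{q}_{\kappa-1}$ and $\mathbf{q}_\kappa$. Multiplying the whole identity by $e^{\sigma_\alpha(\mathbf{q}_\kappa)}$ clears the second weight and converts the first into the stated prefactor $e^{\sigma_\alpha(\mathbf{q}_\kappa)-\sigma_\alpha(\mathbf{q}_{\kappa-1})}$; transposing the $D\sigma_\alpha\,L^d|_\alpha$ term to the right-hand side then yields \eqref{dLCEL} verbatim.

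The computation is elementary, so the only genuine obstacle is bookkeeping: one must keep straight that the conformal factor is evaluated at $\mathbf{q}_{\kappa-1}$ inside the $D_2$ term but at $\mathbf{q}_\kappa$ inside the $D_1$ term, since it is precisely this base-point mismatch, rather than any analytic difficulty, that produces the nontrivial exponential weight distinguishing \eqref{dLCEL} from the flat discrete Euler--Lagrange equation \eqref{dEL}. A secondary point worth stating explicitly is the convention that the weight lives in the first slot, since placing it in the second slot would redistribute the correction term between the two equations and alter the form of the result.
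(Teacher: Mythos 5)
Your proof is correct, and it reaches \eqref{dLCEL} by a genuinely different decomposition of the argument than the paper's. You take the local discrete Euler--Lagrange equation \eqref{dEL-loc} as already established (via Proposition \ref{dEL-prop}) and convert it into global terms by the product rule, exploiting that the conformal weight in \eqref{dgLagfunc} sits in the first slot; the paper instead re-runs the discrete variational principle from scratch on the weighted action sum $S^d([\mathbf{q}])=\sum_{\kappa} e^{-\sigma_\alpha(\mathbf{q}_\kappa)}L^d|_{\alpha}(\mathbf{q}_\kappa,\mathbf{q}_{\kappa+1})$, fixing endpoints, relabelling the index in the $D_2$ sum, and reading off the equation from the vanishing of the coefficient of $\delta\mathbf{q}_\kappa$. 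The two computations are line-by-line parallel (same product rule, same base-point mismatch between $\sigma_\alpha(\mathbf{q}_{\kappa-1})$ and $\sigma_\alpha(\mathbf{q}_\kappa)$, same final multiplication by $e^{\sigma_\alpha(\mathbf{q}_\kappa)}$), so in effect you have factored the paper's proof: the variational content is delegated to the earlier proposition and only algebra remains. What your route buys is economy and the explicit observation that \eqref{dLCEL} carries no dynamical content beyond \eqref{dEL-loc} --- it is the same dynamics rewritten in global variables. What the paper's route buys is that \eqref{dLCEL} is exhibited directly as the extremality condition of a discrete variational principle for the conformally weighted action sum, which is the structure one wants for a variational-integrator reading of the result, and it keeps the proposition self-contained. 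One point you make explicit that the paper leaves implicit, and which is worth stating: the convention $L^d|_\alpha(\mathbf{q}_\kappa,\mathbf{q}_{\kappa+1})=e^{\sigma_\alpha(\mathbf{q}_\kappa)}L^d_\alpha(\mathbf{q}_\kappa,\mathbf{q}_{\kappa+1})$, with the weight evaluated at the first argument, is exactly what fixes the placement of the $D\sigma_\alpha$ term and the exponential prefactor in \eqref{dLCEL}; under the opposite convention the equation would take a different form, so flagging this is a genuine improvement in rigor.
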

 \begin{proof}
We begin with the discrete action 
\begin{equation}\label{serkan}
  S^d([\mathbf{q}])=   \sum_{\kappa=0}^{N-1}L_\alpha^d(\mathbf{q}_\kappa, \mathbf{q}_{\kappa + 1})
\end{equation}
where $L^d_\alpha$ represents the local discrete Lagrangian function. To write the action sum for the global discrete Lagrangian function, we substitute \eqref{dgLagfunc} into the discrete action \eqref{serkan} yields
\begin{equation}
  S^d([\mathbf{q}])=   \sum_{\kappa=0}^{N-1}e^{-\sigma_{\alpha}(\mathbf{q}_\kappa)}L^d|_{\alpha}(\mathbf{q}_\kappa, \mathbf{q}_{\kappa + 1}) . 
\end{equation}
See that the action sum takes different values for different sequences. The variation of the action is computed to be
\begin{equation}
    \begin{split}
         \delta S^d([\mathbf{q}])&= \sum_{\kappa=0}^{N-1}(-e^{-\sigma_\alpha(\mathbf{q}_\kappa)}D\sigma_\alpha(\mathbf{q}_\kappa)L^d|_{\alpha}(\mathbf{q}_\kappa, \mathbf{q}_{\kappa + 1})\delta\mathbf{q}_\kappa \\
         &\qquad \qquad+e^{-\sigma_\alpha(\mathbf{q}_\kappa)}D_1L^d|_{\alpha}(\mathbf{q}_\kappa, \mathbf{q}_{\kappa + 1})\delta\mathbf{q}_\kappa \\
         & \qquad \qquad+ e^{-\sigma_\alpha(\mathbf{q}_\kappa)}D_2L^d|_{\alpha}(\mathbf{q}_\kappa, \mathbf{q}_{\kappa + 1})\delta\mathbf{q}_{\kappa+1}).
    \end{split}
\end{equation}
In order to extremize the action sum, we set $\delta S^d([\mathbf{q}])=0$ while fixing the boundary conditions: $\delta \mathbf{q}_{0}=0$ and $\delta \mathbf{q}_{N}=0$. Accordingly, we obtain
\begin{equation}
    \begin{split}
       &\sum_{\kappa=1}^{N-1}e^{-\sigma_\alpha(\mathbf{q}_\kappa)}(-D\sigma_\alpha(\mathbf{q}_\kappa)L^d|_{\alpha}(\mathbf{q}_\kappa, \mathbf{q}_{\kappa + 1})+D_1L^d|_{\alpha}(\mathbf{q}_\kappa, \mathbf{q}_{\kappa + 1}))\delta\mathbf{q}_\kappa\\
     & \qquad \qquad +\sum_{\kappa=0}^{N-2}e^{-\sigma_\alpha(\mathbf{q}_\kappa)}D_2L^d|_{\alpha}(\mathbf{q}_\kappa, \mathbf{q}_{\kappa + 1})\delta\mathbf{q}_{\kappa+1}=0 .
    \end{split}
\end{equation}
By relabeling $\kappa$ with $\kappa-1$ in the last term we write  
\begin{equation}
    \begin{split}
       &\sum_{\kappa=1}^{N-1}\Big(e^{-\sigma_\alpha(\mathbf{q}_\kappa)}(-D\sigma_\alpha(\mathbf{q}_\kappa)L^d|_{\alpha}(\mathbf{q}_\kappa, \mathbf{q}_{\kappa + 1})+D_1L^d|_{\alpha}(\mathbf{q}_\kappa, \mathbf{q}_{\kappa + 1}))\\
       & \qquad \qquad +e^{-\sigma_\alpha(\mathbf{q}_{\kappa-1})}D_2L^d|_{\alpha}(\mathbf{q}_{\kappa-1}, \mathbf{q}_{\kappa })\Big)\delta\mathbf{q}_{\kappa}=0.
    \end{split}
\end{equation}
Thus for arbitrary $\delta\mathbf{q}_{\kappa}$, the equality holds if \eqref{dLCEL} is satisfied. 
\end{proof}

We call \eqref{dLCEL} as discrete locally conformally Euler-Lagrange (dLCEL) equations. It is noteworthy that if there is no locally conformal character within the setting, meaning $L\vert_\alpha = L_\alpha$, then Proposition \ref{dLCEL-prop} aligns with Proposition \ref{dEL-prop}, and the discrete locally conformally Euler-Lagrange equation \eqref{dLCEL} reduces to the discrete Euler-Lagrange equation \eqref{dEL}. The following diagram summarizes this discussion.
\begin{equation}\label{d-d}
\xymatrix{{\begin{array}{c} LCEL\\ \eqref{LCEL} \end{array}}   
\ar[d]_{discretization} \ar[rrr]^{\varphi=0}&   & &  {\begin{array}{c} ~EL\\
\eqref{EL}  \end{array}} \ar[d]_{discretization} \\
{\begin{array}{c} dLCEL \\ \eqref{dLCEL}  \end{array}}  \ar[rrr]^{\varphi=0}&  & & {\begin{array}{c} dEL \\ \eqref{dEL} \end{array}} }
\end{equation}

\bigskip
\noindent 
\textbf{The Discrete Legendre Transformation.} For a local chart $V_\alpha$ we introduce the local right and left Legendre Transformations as 
\begin{equation}\label{dFL-loc}
\mathbb{F}^{\pm}L^d_\alpha: V_\alpha \times V_\alpha \longrightarrow T^*V_\alpha
\end{equation}
given explicitly by 
\begin{equation}\label{legtrans-loc}
\begin{split}
\mathbb{F}^+L^{d}_\alpha(\mathbf{q}_\kappa,\mathbf{q}_{\kappa+1})&=(\mathbf{q}_{\kappa+1},r^+_{\kappa,\kappa +1}=D_2L^d_\alpha(\mathbf{q}_\kappa,\mathbf{q}_{\kappa+1})), \\
\mathbb{F}^-L^{d}_\alpha(\mathbf{q}_\kappa,\mathbf{q}_{\kappa+1})&=(\mathbf{q}_{\kappa},r^-_{\kappa,\kappa +1} = -D_1L^d_\alpha(\mathbf{q}_\kappa,\mathbf{q}_{\kappa+1})),
\end{split}
\end{equation}
respectively. 
Here, $r^+_{\kappa,\kappa +1}$ represents the right local momenta, while $r^-_{\kappa,\kappa +1}$ denotes the left local momenta. 
To express the discrete Legendre transformations in terms of the global discrete Lagrangian function $L^d$, we utilize the identification \eqref{dgLagfunc} and substitute it into the definitions of local momenta. Consequently, we obtain 
\begin{equation}\label{r+-}
\begin{split}
\mathbf{r}^+_{\kappa,\kappa +1} & =e^{-\sigma_{\alpha}(\mathbf{q}_\kappa)} D_2 L^d|_{\alpha}(\mathbf{q}_\kappa, \mathbf{q}_{\kappa + 1}),
\\
 \mathbf{r}^-_{\kappa,\kappa +1} & = e^{-\sigma_{\alpha}(\mathbf{q}_\kappa)} \big( D\sigma_{\alpha} (\mathbf{q}_\kappa)\cdot L^d|_{\alpha}(\mathbf{q}_\kappa, \mathbf{q}_{\kappa + 1}) - D_1L^d|_{\alpha}(\mathbf{q}_\kappa, \mathbf{q}_{\kappa + 1})\big).
 \end{split}
\end{equation}
Under dLCEL equation \eqref{dLCEL}, and after a parameter shift, the right and the left momenta turn out to be the same and this results in the following definition 
\begin{equation}\label{r-momenta}
\mathbf{r}_\kappa:=\mathbf{r}^+_{\kappa-1,\kappa }=\mathbf{r}^-_{\kappa,\kappa +1}.
\end{equation}
Recall that in the continuous case, there exist two sets of momenta in a relation given by $r_i = e^{-\sigma_\alpha}p_i$. This is depicted in \eqref{r-p}. Motivated by this, and following the order in  \eqref{r+-}, we define two new sets of local momenta
\begin{equation}
\begin{split}
\mathbf{p}^+_{\kappa,\kappa +1} &=e^{\sigma_{\alpha}(\mathbf{q}_\kappa)}r^+_{\kappa,\kappa +1}= D_2 L^d|_{\alpha}(\mathbf{q}_\kappa, \mathbf{q}_{\kappa + 1}).
\\
\mathbf{p}^-_{\kappa,\kappa +1}&= e^{\sigma_{\alpha}(\mathbf{q}_\kappa)} r^-_{\kappa,\kappa +1} = D \sigma_{\alpha} (\mathbf{q}_\kappa)\cdot L^d|_{\alpha}(\mathbf{q}_\kappa, \mathbf{q}_{\kappa + 1}) - D_1L^d|_{\alpha}(\mathbf{q}_\kappa, \mathbf{q}_{\kappa + 1}).
\end{split}
\end{equation}
A naive shift of parameters is not enough to define a momenta $\mathbf{p}$ depending on a single point. More precisely one has that 
\begin{equation} \label{p-momenta}
\mathbf{p}_\kappa:=e^{\sigma_{\alpha}(\mathbf{q}_\kappa)-\sigma_{\alpha}(\mathbf{q}_{\kappa-1})} \mathbf{p}^+_{\kappa-1,\kappa }= \mathbf{p}^-_{\kappa,\kappa +1}.
\end{equation}
It is immediate to see that $\mathbf{r}$-momenta in \eqref{r-momenta} and $\mathbf{p}$-momenta in \eqref{p-momenta} are related as 
\begin{equation}\label{r-p-dd}
\mathbf{r}_{\kappa}= e^{-\sigma_{\alpha}(\mathbf{q}_{\kappa})}\mathbf{p}_{\kappa}
\end{equation}
as we expected. 

\bigskip
\noindent 
\textbf{Poincar\'{e}-Cartan Forms.} Referring to \eqref{PC-1}, we pull back the local canonical one-form $\Theta_\alpha$ in \eqref{r-p}, defined on the cotangent bundle $T^*V_\alpha$, to the product space $V_\alpha \times V_\alpha$ via the right and left Legendre transformations \eqref{legtrans-loc}. Then, in a respective manner, we define the right and left local Poincar\'{e}-Cartan one-forms for the local Lagrangian function $L_\alpha$ as follows
 \begin{equation}
    \begin{split}
        \theta_\alpha^+ &= \mathbb{F}^+L^{d}_\alpha(\Theta_\alpha) = D_2L_\alpha^d(\mathbf{q}_\kappa, \mathbf{q}_{\kappa + 1})d\mathbf{q}_{\kappa + 1},\\ 
  \theta^{-}_\alpha &=\mathbb{F}^-L^{d}_\alpha(\Theta_\alpha) =-D_1L^d_\alpha(\mathbf{q}_\kappa,\mathbf{q}_{\kappa+1})d\mathbf{q}_{\kappa}.
    \end{split}
\end{equation}
 Exterior derivatives of these one-forms coincide and determine local Poincar\'{e}-Cartan two-form
 \begin{equation}
\omega^+_{\alpha}=-d\theta^\pm_{\alpha}
\end{equation}
This is a symplectic two-form if the local Lagrangian function is regular.

Let us use the discrete Legendre transformations to elucidate the locally conformal character of Proposition \ref{dLCEL-prop} and the dLCEL equation presented in \eqref{dLCEL}. We do this for the right Poincar\'{e}-Cartan one-form the left case is more or less following the same lines of calculation. Referring to \eqref{dgLagfunc}, we write the right Poincar\'{e}-Cartan one-form in terms of the global discrete Lagrangian function as
\begin{equation}
\begin{split}
\theta^+_{\alpha} &=D_2(e^{-\sigma_{\alpha}(\mathbf{q}_{\kappa})}L^d|_{\alpha}(\mathbf{q}_\kappa, \mathbf{q}_{\kappa + 1})) d\mathbf{q}_{\kappa + 1} \\
    & = e^{-\sigma_{\alpha}(\mathbf{q}_{\kappa})}D_2L^d|_{\alpha}(\mathbf{q}_\kappa, \mathbf{q}_{\kappa + 1}) d\mathbf{q}_{\kappa + 1}.
    \end{split}
\end{equation}
This one-form fails to be global as in the continuous case. To have a global one we introduce a one-form $\theta^{d+}$ whose restriction to $V_\alpha$ is 
\begin{equation}
    \theta^{d+}|_{\alpha}= e^{\sigma_{\alpha}}\theta^+_{\alpha} =D_2L^d|_{\alpha}(\mathbf{q}_\kappa, \mathbf{q}_{\kappa + 1}) d\mathbf{q}_{\kappa + 1}.
\end{equation}
Taking the negative exterior derivative of this one-form we obtain a global two-form 
\begin{equation}\label{Serkan2}
\begin{split}
    \omega^{d+}|_{\alpha}&=- d \theta^{d+}|_{\alpha}= - e^{\sigma_{\alpha}(\mathbf{q}_{\kappa})}d\theta^+_{\alpha} -e^{\sigma_{\alpha}(\mathbf{q}_{\kappa})} d\sigma_{\alpha}(\mathbf{q}_{\kappa}) \wedge \theta^+_{\alpha} 
    \\ 
&=e^{\sigma_{\alpha}(\mathbf{q}_{\kappa})} \big(\omega^+_{\alpha}-d\sigma_{\alpha}(\mathbf{q}_{\kappa})\wedge \theta^+_{\alpha}\big ). 
\end{split}
\end{equation}
We  introduce discrete locally conformally Poincar\'{e}-Cartan two-form as
\begin{equation}
 \omega_{\varphi}^{d+}\vert_{\alpha}=e^{\sigma_{\alpha}(\mathbf{q}_{\kappa})} \omega^+_{\alpha},
\end{equation}
where $ \omega_{\varphi}^{d+}\vert_{\alpha} $ denotes the restriction of $ \omega_{\varphi}^{d+}$ to local chart $TV_{\alpha}$. 
Referring to \eqref{Serkan2}, it is a direct observation to see that  
\begin{equation}\label{glo-Lag}
     \omega_{\varphi}^{d+}\vert_{\alpha} = \omega^{d+}|_{\alpha}+ d\sigma_{\alpha}(\mathbf{q}_{\kappa})\wedge\theta^+|_{\alpha}.
\end{equation}
is satisfying the LCS two-form construction in \eqref{Omega-phi}. 
\begin{proposition}
    The discrete locally conformally Poincar\'{e}-Cartan two-form $\omega_{\varphi}^{d+}$ defined in \eqref{glo-Lag} is a LCS two-form with the Lee one-form is $d\sigma_{\alpha}$. 
\end{proposition}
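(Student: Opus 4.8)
The plan is to verify directly that $\omega_\varphi^{d+}$ meets the two defining requirements of an LCS two-form recorded in \eqref{eqw}: non-degeneracy, together with the twisted-closedness condition $d\omega_\varphi^{d+}=\varphi\wedge\omega_\varphi^{d+}$ for a closed Lee one-form, which here is claimed to be $\varphi\vert_\alpha=d\sigma_\alpha$. First I would isolate the structural facts about the local object $\omega^+_\alpha$. By its very construction $\omega^+_\alpha=-d\theta^\pm_\alpha$ is the negative exterior derivative of a one-form, so it is automatically closed, $d\omega^+_\alpha=0$; moreover, under the standing regularity hypothesis on the local discrete Lagrangian $L^d_\alpha$, it is non-degenerate, hence symplectic. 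Since $\omega_\varphi^{d+}\vert_\alpha=e^{\sigma_\alpha(\mathbf{q}_\kappa)}\omega^+_\alpha$ is the product of the nowhere-vanishing conformal factor $e^{\sigma_\alpha}$ with a non-degenerate two-form, non-degeneracy of $\omega_\varphi^{d+}$ is immediate and requires no further work.

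The core of the argument is the exterior-derivative computation. Applying the Leibniz rule to $\omega_\varphi^{d+}\vert_\alpha=e^{\sigma_\alpha}\omega^+_\alpha$ and using $d\omega^+_\alpha=0$, I would obtain
\begin{equation*}
d\omega_\varphi^{d+}\vert_\alpha = e^{\sigma_\alpha}\,d\sigma_\alpha\wedge\omega^+_\alpha + e^{\sigma_\alpha}\,d\omega^+_\alpha = d\sigma_\alpha\wedge\bigl(e^{\sigma_\alpha}\omega^+_\alpha\bigr) = d\sigma_\alpha\wedge\omega_\varphi^{d+}\vert_\alpha,
\end{equation*}
which is precisely condition \eqref{eqw} with Lee one-form $\varphi\vert_\alpha=d\sigma_\alpha$, equivalently $d_\varphi\omega_\varphi^{d+}=0$. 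Because $d\sigma_\alpha$ is exact it is in particular closed, so it is a legitimate Lee form. As a cross-check one could instead start from the decomposition \eqref{glo-Lag}, $\omega_\varphi^{d+}\vert_\alpha=\omega^{d+}\vert_\alpha+d\sigma_\alpha\wedge\theta^+\vert_\alpha$, and use $\omega^{d+}\vert_\alpha=-d\theta^{d+}\vert_\alpha$ (whence $d\omega^{d+}\vert_\alpha=0$) together with $d(d\sigma_\alpha)=0$; the same twisted-closed identity drops out, confirming the computation is not sign-sensitive.

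The only genuinely non-formal point, and the step I would treat most carefully, is globalization: one must confirm that the chartwise Lee forms $d\sigma_\alpha$ patch together into a single globally defined closed one-form $\varphi$ rather than merely furnishing a compatible-looking local identity on each $V_\alpha$. This is exactly the cocycle mechanism already established for the continuous LCS structure around \eqref{hola}: the conformal factors $\{\sigma_\alpha\}$ obey the cocycle condition, so on overlaps the exterior derivatives $d\sigma_\alpha$ agree with the global Lee form $\varphi$, while the conformal gluing relations \eqref{dogu}--\eqref{glo-glue-forms} ensure that the local pieces $\omega_\varphi^{d+}\vert_\alpha=e^{\sigma_\alpha}\omega^+_\alpha$ assemble into a well-defined global two-form. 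Granting this patching, the chartwise verification above upgrades to the assertion that $\omega_\varphi^{d+}$ is a non-degenerate two-form satisfying $d\omega_\varphi^{d+}=\varphi\wedge\omega_\varphi^{d+}$ with $\varphi\vert_\alpha=d\sigma_\alpha$, which is exactly the claimed LCS structure and completes the proof.
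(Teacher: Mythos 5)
Your proof is correct, but it takes a genuinely different route from the paper's. The paper argues in coordinates: it expands $\omega_\varphi^{d+}\vert_\alpha$ via the decomposition \eqref{glo-Lag} into explicit partial derivatives of $L^d|_\alpha$ and $\sigma_\alpha$, takes the exterior derivative, and lets the terms with coefficients symmetric in a pair of indices cancel against the antisymmetric wedges, ending with $d\omega_\varphi^{d+}\vert_\alpha = d\sigma_\alpha(\mathbf{q}_\kappa)\wedge\omega^{+}\vert_\alpha$, which it then reads as condition \eqref{hola}. You instead start from the conformal-scaling definition $\omega_\varphi^{d+}\vert_\alpha = e^{\sigma_\alpha(\mathbf{q}_\kappa)}\omega^+_\alpha$ and get the identity in one intrinsic step: the Leibniz rule plus exactness of $\omega^+_\alpha=-d\theta^\pm_\alpha$ (hence $d\omega^+_\alpha=0$) give $d\omega_\varphi^{d+}\vert_\alpha = d\sigma_\alpha\wedge\omega_\varphi^{d+}\vert_\alpha$ directly. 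The two conclusions are equivalent: the paper's right-hand side differs from yours by a term proportional to $d\sigma_\alpha\wedge d\sigma_\alpha$, which vanishes, a point your cross-check via \eqref{glo-Lag} implicitly handles. Your route buys transparency and robustness (it shows that \emph{any} conformal rescaling $e^{\sigma}\omega$ of a closed non-degenerate two-form satisfies the LCS condition with Lee form $d\sigma$, with no index bookkeeping), while the paper's coordinate computation has the side benefit of exhibiting the explicit component form of $\omega_\varphi^{d+}$. You also cover two points the paper's proof leaves tacit: non-degeneracy, correctly traced to the regularity assumption on $L^d_\alpha$, and the patching of the chartwise Lee forms $d\sigma_\alpha$ into a single closed one-form, for which deferring to the cocycle mechanism of the continuous theory is exactly what the paper itself does.
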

\begin{proof}
Let us now perform the following calculation to assure once more that the discrete locally conformally Poincar\'{e}-Cartan  two-form $ \omega_{\varphi}^{d+}$ is indeed an LCS two-form:
\begin{equation}
\begin{split}
    \omega_{\varphi}^{d+}\vert_{\alpha} & =-d(D_2L^d|_{\alpha}(\mathbf{q}_\kappa, \mathbf{q}_{\kappa + 1}) d\mathbf{q}_{\kappa + 1} ) \\
    &\qquad \qquad + d\sigma_{\alpha}(\mathbf{q}_\kappa)\wedge D_2L^d|_{\alpha}(\mathbf{q}_\kappa,\mathbf{q}_{\kappa + 1} )d\mathbf{q}_{\kappa + 1} \\
    &=   -\frac{\partial^2 L^d|_{\alpha}}{\partial  \mathbf{q}^i_{\kappa }\partial \mathbf{q}^j_{\kappa+1 } }d \mathbf{q}^i_{\kappa }\wedge d \mathbf{q}^j_{\kappa+1 } -\frac{\partial^2 L^d|_{\alpha}}{\partial  \mathbf{q}^i_{\kappa+1 }\partial \mathbf{q}^j_{\kappa+1 } }d \mathbf{q}^i_{\kappa+1 }\wedge d \mathbf{q}^j_{\kappa+1 }\\
    &\hspace{2cm}  + \frac{\partial \sigma_{\alpha}}{\partial \mathbf{q}^i_{\kappa } } \frac{\partial L^d|_{\alpha}}{\partial \mathbf{q}^j_{\kappa+1 } }d \mathbf{q}^i_{\kappa }\wedge d \mathbf{q}^j_{\kappa+1 }. 
    \end{split}
\end{equation}
Then we take the exterior derivative of this explicit expression as 
\begin{equation}
\begin{split}
d \omega_{\varphi}^{d+}\vert_{\alpha} &=  \frac{\partial \sigma_{\alpha}}{\partial \mathbf{q}^i_{\kappa } }\frac{\partial^2 L^d|_{\alpha}}{\partial \mathbf{q}^l_{\kappa }\partial \mathbf{q}^j_{\kappa+1 } }d \mathbf{q}^l_{\kappa }\wedge d \mathbf{q}^i_{\kappa } \wedge d \mathbf{q}^j_{\kappa+1 }  \\
& = \frac{\partial \sigma_{\alpha}}{\partial \mathbf{q}^i_{\kappa } }d \mathbf{q}^i_{\kappa } \wedge (-\frac{\partial^2 L^d|_{\alpha}}{\partial \mathbf{q}^l_{\kappa }\partial \mathbf{q}^j_{\kappa+1 } }d \mathbf{q}^l_{\kappa }\wedge d \mathbf{q}^j_{\kappa+1 }  )\\
&= d\sigma_{\alpha}(\mathbf{q}_{\kappa}) \wedge  \omega^+\vert_{\alpha}. 
    \end{split}
    \end{equation}
reading that  that the discrete locally conformally Poincar\'{e}-Cartan  two-form satisfies the LCS two-form condition in \eqref{hola}. 
\end{proof}

\subsection{Locally Conformally Discrete Hamiltonian Dynamics} 

We have defined the discrete locally conformal Euler-Lagrange equations \eqref{dLCEL} in Proposition \ref{dLCEL-prop}. We have also determined the right and left Legendre transformations $\mathbb{F}^{\pm}L^d_\alpha$ in \eqref{legtrans-loc}. In this subsection, our goal is to derive discrete versions of the right and the left discrete Hamiltonian dynamics exhibited in Proposition \ref{dHam-main}.

Let us start with the local picture first by concentrating on the plus momenta. Accordingly, we pass to the Hamiltonian picture applying the right discrete Legendre transformation to the discrete locally conformally Euler-Lagrange equations \eqref{dLCEL}. At first, we introduce the Hamiltonian function for $\mathbf{r}$-momenta as 
\begin{equation}\label{pelin}
    {H_\alpha^d}^+(\mathbf{q}_\kappa, \mathbf{r}_{\kappa+1})=  \mathbf{r}_{\kappa+1} \cdot  \mathbf{q}_{\kappa+1}- L^d_{\alpha}(\mathbf{q}_\kappa, \mathbf{q}_{\kappa + 1}).
\end{equation}
A direct computation reads the local right discrete Hamiltonian dynamics in terms of the $\mathbf{r}$-momenta  as  
\begin{equation}\label{rdhe}
\begin{cases}
& \mathbf{q}_{\kappa+1}=D_2{H_\alpha^d}^+(\mathbf{q}_\kappa,{\mathbf{r}}_{\kappa+1}),\nonumber\\
&{\mathbf{r}}_\kappa=D_1{H_\alpha^d}^+(\mathbf{q}_\kappa,{\mathbf{r}}_{\kappa+1}).
\end{cases}
\end{equation}
We assume the locally conformal character for the local discrete Hamiltonian functions. Motivated by the continuous case depicted in \eqref{Fatih}, we take
\begin{equation}\label{Fatih-d-local}
e^{\sigma_\alpha}H^{d+}_\alpha=e^{\sigma_\beta}H^{d+}_\beta
\end{equation}
on the intersection of the tangent bundles $TV_\alpha \cap TV_\beta$. The equality determines an obstruction to glue up local functions $\{H^d_\alpha\}$ to a real-valued function on the cotangent bundle $T^*Q$. However, the same identity determines a global discrete Hamiltonian function $H^d$ whose local characterization is given by in terms of the $\mathbf{p}$-momenta as  
\begin{equation}\label{Fatih-discrete-r}
H^{d+}\vert_\alpha(\mathbf{q}_\kappa, \mathbf{p}_{\kappa+1}):=  e^{\sigma_{\alpha}(\mathbf{q}_\kappa)} {H_\alpha^d}^+(\mathbf{q}_\kappa, \mathbf{r}_{\kappa+1}).
\end{equation}
Here the $\mathbf{r}$-momenta and the $\mathbf{p}$-momenta are related as shown in  \eqref{r-p-dd}. 

\begin{proposition}
    The right discrete locally conformal Hamiltonian dynamics generated by the right discrete Hamiltonian function $H^{d+}$ given in \eqref{Fatih-discrete-r} is 
    \begin{equation}\label{r-dLCH} 
\begin{cases}
&\mathbf{q}_{\kappa + 1}= e^{\sigma_{\alpha}(\mathbf{q}_{\kappa+1})-\sigma_{\alpha}(\mathbf{q}_{\kappa})} D_2 H^{d+}\vert_{\alpha}(\mathbf{q}_{\kappa},\mathbf{p}_{\kappa+1}), \\
& \mathbf{p}_{\kappa}= D_1 H^{d+}\vert_{\alpha}(\mathbf{q}_{\kappa},\mathbf{p}_{\kappa+1}) -D\sigma_{\alpha}(\mathbf{q}_{\kappa})\cdot  H^{d+}\vert_{\alpha}(\mathbf{q}_{\kappa},\mathbf{p}_{\kappa+1}) .
\end{cases}
\end{equation}
\end{proposition}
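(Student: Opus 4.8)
The plan is to mimic the structure of the proof of Proposition \ref{dHam-main}, but now carrying the conformal factors through the computation. Concretely, I would start from the defining relation \eqref{Fatih-discrete-r}, namely $H^{d+}\vert_\alpha(\mathbf{q}_\kappa,\mathbf{p}_{\kappa+1}) = e^{\sigma_\alpha(\mathbf{q}_\kappa)}{H_\alpha^d}^+(\mathbf{q}_\kappa,\mathbf{r}_{\kappa+1})$, and compute its two partial derivatives $D_1$ and $D_2$ with respect to the base point $\mathbf{q}_\kappa$ and the momentum $\mathbf{p}_{\kappa+1}$ respectively. The local right Hamiltonian equations \eqref{rdhe} for the $\mathbf{r}$-momenta are already established, so those will serve as the known input; the whole task is to translate them into statements about $H^{d+}\vert_\alpha$ and the $\mathbf{p}$-momenta using the conformal relation \eqref{r-p-dd}, i.e.\ $\mathbf{r}_\kappa = e^{-\sigma_\alpha(\mathbf{q}_\kappa)}\mathbf{p}_\kappa$.

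First I would handle the $D_2$ equation, which is the cleaner of the two. Differentiating \eqref{Fatih-discrete-r} with respect to $\mathbf{p}_{\kappa+1}$ and using the chain rule through the relation $\mathbf{r}_{\kappa+1} = e^{-\sigma_\alpha(\mathbf{q}_{\kappa+1})}\mathbf{p}_{\kappa+1}$, the factor $e^{\sigma_\alpha(\mathbf{q}_\kappa)}$ out front combines with the $e^{-\sigma_\alpha(\mathbf{q}_{\kappa+1})}$ coming from the derivative of $\mathbf{r}_{\kappa+1}$ to produce exactly the prefactor $e^{\sigma_\alpha(\mathbf{q}_\kappa)-\sigma_\alpha(\mathbf{q}_{\kappa+1})}$ in front of the known local identity $D_2{H_\alpha^d}^+ = \mathbf{q}_{\kappa+1}$. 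Rearranging then gives the first line of \eqref{r-dLCH}. Here I should be careful that $\mathbf{q}_{\kappa+1}$ is a free evaluation point, so differentiating $\mathbf{r}_{\kappa+1}$ with respect to $\mathbf{p}_{\kappa+1}$ at fixed $\mathbf{q}_{\kappa+1}$ contributes only the scalar factor $e^{-\sigma_\alpha(\mathbf{q}_{\kappa+1})}$ and nothing else.

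Next I would treat the $D_1$ equation, which is where the genuine bookkeeping lives. Applying $\partial/\partial\mathbf{q}_\kappa$ to \eqref{Fatih-discrete-r} produces two contributions by the product rule: a term in which the derivative hits the exponential $e^{\sigma_\alpha(\mathbf{q}_\kappa)}$, yielding $D\sigma_\alpha(\mathbf{q}_\kappa)\cdot H^{d+}\vert_\alpha$, and a term in which it hits ${H_\alpha^d}^+$, yielding $e^{\sigma_\alpha(\mathbf{q}_\kappa)}D_1{H_\alpha^d}^+ = e^{\sigma_\alpha(\mathbf{q}_\kappa)}\mathbf{r}_\kappa$ by \eqref{rdhe}. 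Converting $e^{\sigma_\alpha(\mathbf{q}_\kappa)}\mathbf{r}_\kappa = \mathbf{p}_\kappa$ via \eqref{r-p-dd} and then solving for $\mathbf{p}_\kappa$ yields the second line of \eqref{r-dLCH}.

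The main obstacle I anticipate is not any single estimate but the dependence of $\mathbf{r}_{\kappa+1}$ on $\mathbf{q}_\kappa$ when forming $D_1$: one must decide whether $\mathbf{r}_{\kappa+1}$ is held fixed or varied as $\mathbf{q}_\kappa$ changes. Since $H^{d+}\vert_\alpha$ is by construction a function of the independent pair $(\mathbf{q}_\kappa,\mathbf{p}_{\kappa+1})$, the correct reading is that $\mathbf{p}_{\kappa+1}$ is the held-fixed second slot, so that $\mathbf{r}_{\kappa+1}=e^{-\sigma_\alpha(\mathbf{q}_{\kappa+1})}\mathbf{p}_{\kappa+1}$ must be treated consistently; I would state explicitly which variables are independent before differentiating, so that the single $\sigma_\alpha$-derivative term is accounted for exactly once and the cross term that would otherwise appear is correctly seen to vanish. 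Once this convention is fixed, both equations follow by direct substitution of \eqref{rdhe} and \eqref{r-p-dd}, and I would close by noting that setting $\sigma_\alpha\equiv 0$ recovers the right equations \eqref{rd-Hameq} of Proposition \ref{dHam-main}.
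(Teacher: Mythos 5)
Your proposal is correct, but it follows a different factorization of the argument than the paper does. The paper never differentiates the scaling relation \eqref{Fatih-discrete-r} directly; instead it substitutes all three identifications (global Hamiltonian, global Lagrangian via \eqref{dgLagfunc}, and the momentum conversion \eqref{r-p-dd}) into the Legendre-type identity \eqref{pelin}, obtaining the purely global relation \eqref{assaa}, namely $e^{-\sigma_{\alpha}(\mathbf{q}_{\kappa})}H^{d+}\vert_\alpha = e^{-\sigma_{\alpha}(\mathbf{q}_{\kappa+1})}\mathbf{p}_{\kappa+1}\cdot\mathbf{q}_{\kappa+1} - e^{-\sigma_{\alpha}(\mathbf{q}_{\kappa})}L^d\vert_\alpha$, and then differentiates that in $\mathbf{q}_\kappa$ and $\mathbf{p}_{\kappa+1}$, identifying $\mathbf{p}_\kappa$ at the end through the Lagrangian-side momentum definition \eqref{p-momenta}. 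You instead stay on the Hamiltonian side: you differentiate \eqref{Fatih-discrete-r} itself, feed in the already-established local equations \eqref{rdhe} as a black box, and convert momenta with \eqref{r-p-dd}. Your route is more modular and makes transparent that the dLCS Hamilton equations are exactly the conformal rescaling of the local discrete Hamilton equations; the paper's route is more self-contained, in that it needs only the momentum definitions \eqref{p-momenta} and not the prior validity of \eqref{rdhe}. One caution: both arguments rest on the same formal convention, which you flag but resolve slightly optimistically. If one treats $\mathbf{q}_{\kappa+1}$ as a genuine implicit function of $(\mathbf{q}_\kappa,\mathbf{p}_{\kappa+1})$ (via inverting the momentum relation), then the variation of the factor $e^{-\sigma_\alpha(\mathbf{q}_{\kappa+1})}$ produces a term proportional to $D\sigma_\alpha(\mathbf{q}_{\kappa+1})\,(\mathbf{p}_{\kappa+1}\cdot\mathbf{q}_{\kappa+1})$ that does not cancel the way the analogous cross terms do in the unweighted Legendre transform; it is absent only under the convention that $\mathbf{q}_{\kappa+1}$ is held fixed as a free evaluation point during the differentiation. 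Since the paper's own proof adopts precisely this convention when differentiating \eqref{assaa}, this is not a gap of your proposal relative to the paper, but you should state the convention rather than assert that the cross term "vanishes."
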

\begin{proof}
Referring to the global discrete Lagrangian function $L^d$ in \eqref{dgLagfunc}, the equation \eqref{pelin} can be written as 
\begin{equation} \label{assaa}
\begin{split}
  &     e^{-\sigma_{\alpha}(\mathbf{q}_{\kappa})}H^{d+}\vert_\alpha(\mathbf{q}_\kappa, \mathbf{p}_{\kappa+1})\\&  \qquad \qquad  =  e^{-\sigma_{\alpha}(\mathbf{q}_{\kappa+1})}\mathbf{p}_{\kappa+1} \cdot \mathbf{q}_{\kappa+1}- e^{-\sigma_{\alpha}(\mathbf{q}_{\kappa})}L^d|_{\alpha}(\mathbf{q}_\kappa, \mathbf{q}_{\kappa + 1}).
\end{split}
  \end{equation}
Taking the partial derivative of this equation with respect to its first entry $\mathbf{q}_{\kappa}$, we arrive at 
\begin{equation}
    \begin{split}
        & - e^{-\sigma_{\alpha}(\mathbf{q}_{\kappa})}D\sigma_\alpha(\mathbf{q}_\kappa)H^{d+}\vert_\alpha(\mathbf{q}_\kappa, \mathbf{p}_{\kappa+1})+ e^{-\sigma_{\alpha}(\mathbf{q}_{\kappa})}D_1H^{d+}\vert_\alpha(\mathbf{q}_\kappa, \mathbf{p}_{\kappa+1}))\\
        & \qquad \qquad =e^{-\sigma_{\alpha}(\mathbf{q}_{\kappa})}(D\sigma_\alpha(\mathbf{q}_\kappa)L^d|_{\alpha}(\mathbf{q}_\kappa, \mathbf{q}_{\kappa + 1})-D_1L^d|_{\alpha}(\mathbf{q}_\kappa, \mathbf{q}_{\kappa + 1}))
        \end{split}
        \end{equation}
 then use the definition $\mathbf{p}_\kappa$ in \eqref{p-momenta} we get the second line in \eqref{r-dLCH}. In a similar manner, to obtain the first line in \eqref{r-dLCH}, we differentiate \eqref{assaa} with respect to the second entry $\mathbf{p}_{\kappa+1}$, and a direct calculation yields the result. 
\end{proof}
Notice that if one assumes that the conformal functions $\sigma_{\alpha}(\mathbf{q}_{\kappa})$ are constant functions, then the right discrete locally conformally symplectic Hamiltonian dynamics in \eqref{r-dLCH} reduce to the discrete Hamilton's equation \eqref{rd-Hameq}.

Equivalently, with the left Legendre transformation, we can obtain the local left discrete Hamiltonian
\begin{equation}\label{left-discham}
    {H_\alpha^d}^-(\mathbf{q}_{\kappa+1},\mathbf{r}_{\kappa})=-{\mathbf{r}}_\kappa \mathbf{q}_\kappa-L_\alpha^d(\mathbf{q}_\kappa,\mathbf{q}_{\kappa+1})
\end{equation}
and the local left discrete Hamilton's equations
\begin{equation}\label{ldhe}
\begin{cases}
 &\mathbf{q}_{\kappa}=-D_2{H_\alpha^d}^-(\mathbf{q}_{\kappa+1},{\mathbf{p}}_\kappa),\nonumber\\
&{\mathbf{r}}_{\kappa+1}=-D_1{H_\alpha^d}^-(\mathbf{q}_{\kappa+1},{\mathbf{p}}_\kappa).
 \end{cases}
\end{equation}
We express the relation \eqref{Fatih} for the left local discrete Hamiltonian function to define the left global discrete Hamiltonian function $H^{d-}$ with the following local expression:
\begin{equation}\label{Fatih-discrete-l}
H^{d-}\vert_\alpha(\mathbf{q}{\kappa+1}, \mathbf{p}{\kappa}):= e^{\sigma_{\alpha}(\mathbf{q}{\kappa+1})} {H\alpha^d}^-(\mathbf{q}{\kappa+1}, \mathbf{r}{\kappa}).
\end{equation}
This leads us to write \eqref{left-discham} in terms of the global discrete Hamiltonian function as 
\begin{equation}
     e^{-\sigma_{\alpha}(\mathbf{q}_{\kappa+1})}H^{d-}\vert_\alpha(\mathbf{q}_{\kappa+1}, \mathbf{p}_{\kappa})=  -e^{-\sigma_{\alpha}(\mathbf{q}_{\kappa})}\mathbf{p}_{\kappa} \cdot \mathbf{q}_{\kappa}- e^{-\sigma_{\alpha}(\mathbf{q}_{\kappa})}L^d|_{\alpha}(\mathbf{q}_\kappa, \mathbf{q}_{\kappa + 1}).
\end{equation}
By performing similar calculations done for the case of the right Hamiltonian, we obtain the left discrete LCS Hamilton's equation stated in the following proposition.
\begin{proposition}
    The left discrete LCS Hamilton's equation generated by the left discrete Hamiltonian function $H^{d-}$ in \eqref{Fatih-discrete-l} is 
    \begin{equation}\label{l-dLCH}
\begin{cases}
&\mathbf{q}_{\kappa}=-e^{\sigma{\alpha}(\mathbf{q}_{\kappa})-\sigma{\alpha}(\mathbf{q}_{\kappa+1})} D_2 H^{d-}\vert_{\alpha}(\mathbf{q}_{\kappa+1},\mathbf{p}_{\kappa}), \\
& \mathbf{p}_{\kappa+1}=e^{\sigma{\alpha}(\mathbf{q}_{\kappa})-\sigma{\alpha}(\mathbf{q}_{\kappa+1})}(D\sigma{\alpha}(\mathbf{q}_{\kappa+1})\cdot H^{d-}\vert{\alpha}(\mathbf{q}_{\kappa+1},\mathbf{p}_{\kappa}) \\
&\qquad \qquad \qquad - D_1 H^{d-}\vert_{\alpha}(\mathbf{q}_{\kappa+1},\mathbf{p}_{\kappa})).
\end{cases}
\end{equation}
\end{proposition}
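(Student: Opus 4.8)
The plan is to mirror, step for step, the proof of the right discrete locally conformal Hamiltonian dynamics in \eqref{r-dLCH}, transposing every ingredient to the left generating function $H^{d-}$. The working identity is the one displayed immediately before the statement, which expresses $e^{-\sigma_\alpha(\mathbf{q}_{\kappa+1})}H^{d-}\vert_\alpha(\mathbf{q}_{\kappa+1},\mathbf{p}_\kappa)$ in terms of the coupling term $-e^{-\sigma_\alpha(\mathbf{q}_\kappa)}\mathbf{p}_\kappa\cdot\mathbf{q}_\kappa$ and the rescaled Lagrangian $-e^{-\sigma_\alpha(\mathbf{q}_\kappa)}L^d\vert_\alpha(\mathbf{q}_\kappa,\mathbf{q}_{\kappa+1})$; this identity itself follows from the definition \eqref{left-discham}, the conformal gluings \eqref{Fatih-discrete-l} and \eqref{dgLagfunc}, and the momentum relation \eqref{r-p-dd}. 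As in the right case, I would regard it as an identity in the two slot variables of $H^{d-}\vert_\alpha$, namely the position $\mathbf{q}_{\kappa+1}$ and the momentum $\mathbf{p}_\kappa$, and differentiate it with respect to each slot in turn, holding the complementary slot fixed so that the coupling term contributes only through the variable being differentiated.

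First I would differentiate with respect to the second entry $\mathbf{p}_\kappa$. On the left only $e^{-\sigma_\alpha(\mathbf{q}_{\kappa+1})}D_2 H^{d-}\vert_\alpha$ survives, while on the right only the coupling term depends on $\mathbf{p}_\kappa$ and yields $-e^{-\sigma_\alpha(\mathbf{q}_\kappa)}\mathbf{q}_\kappa$; solving for $\mathbf{q}_\kappa$ and combining the two exponentials into the single factor $e^{\sigma_\alpha(\mathbf{q}_\kappa)-\sigma_\alpha(\mathbf{q}_{\kappa+1})}$ gives the first line of \eqref{l-dLCH} at once. Next I would differentiate the same identity with respect to the first entry $\mathbf{q}_{\kappa+1}$. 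The left side then produces $-e^{-\sigma_\alpha(\mathbf{q}_{\kappa+1})}D\sigma_\alpha(\mathbf{q}_{\kappa+1})H^{d-}\vert_\alpha + e^{-\sigma_\alpha(\mathbf{q}_{\kappa+1})}D_1 H^{d-}\vert_\alpha$, while on the right the coupling term drops out and only the factor $L^d\vert_\alpha$ carries explicit dependence on $\mathbf{q}_{\kappa+1}$, contributing $-e^{-\sigma_\alpha(\mathbf{q}_\kappa)}D_2 L^d\vert_\alpha(\mathbf{q}_\kappa,\mathbf{q}_{\kappa+1})$.

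The decisive step, and the one I expect to be the main obstacle, is the passage from the Lagrangian quantity $D_2 L^d\vert_\alpha(\mathbf{q}_\kappa,\mathbf{q}_{\kappa+1})$ to the single-point momentum $\mathbf{p}_{\kappa+1}$. By construction $D_2 L^d\vert_\alpha$ is the two-point momentum $\mathbf{p}^+_{\kappa,\kappa+1}$, and \eqref{p-momenta} relates it to $\mathbf{p}_{\kappa+1}$ through the asymmetric conformal factor $e^{\sigma_\alpha(\mathbf{q}_{\kappa+1})-\sigma_\alpha(\mathbf{q}_\kappa)}$ once the index is shifted. Inserting this relation and clearing the common factor $e^{-\sigma_\alpha(\mathbf{q}_{\kappa+1})}$ then determines the exact exponential weight carried by $\mathbf{p}_{\kappa+1}$ relative to the combination $D\sigma_\alpha(\mathbf{q}_{\kappa+1})H^{d-}\vert_\alpha - D_1 H^{d-}\vert_\alpha$ in the second line of \eqref{l-dLCH}. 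Getting the sign and placement of these exponents consistent with \eqref{p-momenta} is the delicate point of the whole computation, since this is precisely the bookkeeping that collapses in the constant-$\sigma_\alpha$ reduction to the ordinary left discrete Hamilton's equations \eqref{ld-Hameq}, where every such factor becomes unity and an error in the exponent would be invisible; it must therefore be tracked directly from the asymmetric definition in \eqref{p-momenta} rather than read off from the limiting case.
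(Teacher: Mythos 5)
Your strategy---transposing the right-case proof step for step---is exactly what the paper itself does (its proof of this proposition consists solely of the remark that ``similar calculations'' to the right case yield the result), and your derivation of the first line of \eqref{l-dLCH} is correct. The problem is that the step you yourself single out as decisive, carried out faithfully, does \emph{not} produce the second line as stated. Differentiating the working identity in $\mathbf{q}_{\kappa+1}$ gives
\begin{equation*}
e^{-\sigma_{\alpha}(\mathbf{q}_{\kappa+1})}\bigl(D_1H^{d-}\vert_{\alpha}-D\sigma_{\alpha}(\mathbf{q}_{\kappa+1})\cdot H^{d-}\vert_{\alpha}\bigr)=-e^{-\sigma_{\alpha}(\mathbf{q}_{\kappa})}\,D_2L^d\vert_{\alpha}(\mathbf{q}_{\kappa},\mathbf{q}_{\kappa+1}),
\end{equation*}
and inserting \eqref{p-momenta} in its shifted form $\mathbf{p}_{\kappa+1}=e^{\sigma_{\alpha}(\mathbf{q}_{\kappa+1})-\sigma_{\alpha}(\mathbf{q}_{\kappa})}D_2L^d\vert_{\alpha}(\mathbf{q}_{\kappa},\mathbf{q}_{\kappa+1})$ turns the right-hand side into $-e^{-\sigma_{\alpha}(\mathbf{q}_{\kappa+1})}\mathbf{p}_{\kappa+1}$, which carries the \emph{same} conformal weight as the left-hand side. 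Clearing the common factor $e^{-\sigma_{\alpha}(\mathbf{q}_{\kappa+1})}$ leaves
\begin{equation*}
\mathbf{p}_{\kappa+1}=D\sigma_{\alpha}(\mathbf{q}_{\kappa+1})\cdot H^{d-}\vert_{\alpha}(\mathbf{q}_{\kappa+1},\mathbf{p}_{\kappa})-D_1H^{d-}\vert_{\alpha}(\mathbf{q}_{\kappa+1},\mathbf{p}_{\kappa}),
\end{equation*}
with no exponential prefactor at all---exactly parallel to the right case, where the momentum line of \eqref{r-dLCH} carries no prefactor either.

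By contrast, the prefactor $e^{\sigma_{\alpha}(\mathbf{q}_{\kappa})-\sigma_{\alpha}(\mathbf{q}_{\kappa+1})}$ in the printed \eqref{l-dLCH} arises only if one identifies $\mathbf{p}_{\kappa+1}$ with $D_2L^d\vert_{\alpha}(\mathbf{q}_{\kappa},\mathbf{q}_{\kappa+1})=\mathbf{p}^{+}_{\kappa,\kappa+1}$, i.e.\ uses the unweighted rule \eqref{glo-momenta} instead of \eqref{p-momenta}: unwinding the definitions shows that the printed second line is equivalent to $\mathbf{p}_{\kappa+1}=\mathbf{p}^{+}_{\kappa,\kappa+1}$, which contradicts \eqref{p-momenta} whenever $\sigma_{\alpha}(\mathbf{q}_{\kappa})\neq\sigma_{\alpha}(\mathbf{q}_{\kappa+1})$. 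So your proposal cannot close as written: either you adopt the unweighted identification---abandoning the very bookkeeping you call the delicate point---and land on \eqref{l-dLCH}, or you keep \eqref{p-momenta}, as you insist one must, and land on the prefactor-free equation above, in which case the proposition as printed is inconsistent with the paper's own momentum definition. Your write-up glosses over this fork by never stating what the ``exact exponential weight'' comes out to be; and, as you correctly note, the constant-$\sigma_{\alpha}$ reduction to \eqref{ld-Hameq} cannot detect which alternative is correct, so the discrepancy has to be resolved explicitly rather than left implicit.
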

The following diagram gives the hierarchy of discrete and continuous Hamiltonian dynamics discussed in this work: 
\begin{equation}\label{d-h}
\xymatrix{{\begin{array}{c} LCS~ HE\\ \eqref{LCSHE} \end{array}}   
\ar[d]_{discretization} \ar[rrr]^{\varphi=0}&   & &  {\begin{array}{c} ~HE\\
\eqref{HE}  \end{array}} \ar[d]_{discretization} \\
{\begin{array}{c} dLCS ~ HE \\ right: \eqref{r-dLCH} \\  left:\eqref{l-dLCH}  \end{array}}  \ar[rrr]^{\varphi=0}&  & & {\begin{array}{c} dHE \\ right: \eqref{rd-Hameq} \\  left:\eqref{ld-Hameq} \end{array}} }
\end{equation}
where $d$ refers to term discrete whereas $HE$ stands for an abbreviation of Hamilton's equation.

\section{Conclusion and Future Work}
In this work, we have examined the globalization problem of discrete Lagrangian and Hamiltonian dynamics in the locally conformal framework. We first presented the locally conformal Euler-Lagrange dynamics \eqref{LCEL} and discretized it in \eqref{dLCEL}. Diagram \ref{d-d} demonstrates this at a glance. Additionally, we provided the discretization of LCS Hamiltonian dynamics \eqref{r-dLCH} and \eqref{l-dLCH}, illustrated in Diagram \ref{d-h}.

As a future work, we plan to generalize the present approach to include time-dependent Lagrangian dynamics. Although discretization techniques for such Lagrangian and Hamiltonian systems are available in the literature, see, for example, \cite{Colombo-discrete-cosymplectic,deLeon-time-discrete}, the globalization problem remains unexplored. Cosymplectic manifolds are well-suited for studying time-dependent dynamics \cite{Cape}.  To address the globalization problem in this context, locally conformally cosymplectic manifolds, as described in \cite{AtEsLeSa23,chinea91}, could provide a suitable framework. Another future work to is write geometric Hamilton-Jacobi theory for the discrete locally conformal dynamics. For the continuous locally conformal dynamics the Hamilton-Jacobi problem is studied in a series of papers \cite{esen2022reviewing,EsLeSaZa-Cauchy,EsLeSaZa21b,EsenLeonSarZaj1} ranging from the particle motion to the fields theories.

\bibliographystyle{abbrv}
  
\bibliography{references}

\end{document}